\title{Hardness of Vertex Deletion  and Project Scheduling\thanks{
This research was supported by Grant 228021-ECCSciEng of the European
Research Council.}}
\author{Ola Svensson \\
 EPFL, Switzerland \\
  {\tt ola.svensson@epfl.ch }}
\newtheorem{theorem}{Theorem}[section]
\newtheorem{observation}[theorem]{Observation}
\newtheorem{lemma}[theorem]{Lemma}
\newtheorem{claim}[theorem]{Claim}
\newtheorem{conjecture}[theorem]{Conjecture}
\newenvironment{proof}{\begin{trivlist}
    \item[\hskip\labelsep {\bf Proof}.]}{\QED \end{trivlist}}
\newcommand{\QED}{\hfill $\square$}
\newcommand{\hide}[1]{
}
\newcounter{myclaim}
\newcommand{\Inf}{\ensuremath{\mbox{\textnormal{Infl}}}\xspace}
\newcommand{\E}{\ensuremath{\mathbb{E}}\xspace}
\begin{document}
\maketitle

\begin{abstract}
Assuming the Unique Games Conjecture, we show strong inapproximability
results for two natural vertex deletion problems on directed graphs:
for any integer $k\geq 2$ and arbitrary small $\epsilon > 0$, the
Feedback Vertex Set problem and the DAG Vertex Deletion problem are
inapproximable within a factor $k-\epsilon$
even on graphs where the vertices can be almost partitioned into $k$
solutions. This gives a more structured and therefore stronger
UGC-based hardness result for the Feedback Vertex Set problem that is
also simpler (albeit using the ``It Ain't Over Till It's Over''
theorem) than the previous hardness result.

In comparison to the classical Feedback Vertex Set problem, the DAG
Vertex Deletion problem has received little attention and, although we
think it is a natural and interesting problem, the main motivation for
our inapproximability result stems from its relationship with the
classical Discrete Time-Cost Tradeoff Problem. More specifically, our
results imply that the deadline version is NP-hard to approximate
within any constant assuming the Unique Games Conjecture. This
explains the difficulty in obtaining good approximation algorithms for
that problem and further motivates previous alternative approaches
such as bicriteria approximations.

\end{abstract}

\section{Introduction}\label{sec:intro}
 
Many interesting problems can be formulated as that of finding a large induced subgraph satisfying a desired property of a given (directed) graph. One of the most well studied such problems is the \emph{Feedback Vertex Set (FVS)} problem where the property is acyclicity, i.e., given a directed graph $G= (V,E)$ we wish to delete the minimum number of vertices  so that the resulting graph is acyclic.  Another example is the \emph{DAG Vertex Deletion (DVD)} problem, 
where we are given an integer $k$ and a directed acyclic graph and we wish to delete the minimum number of vertices  so that 
the resulting graph has no path of length\footnote{For notational convenience, we shall measure the length of a path in terms of the number of vertices it contains instead of the number of edges.}  $k$.

The FVS problem and the related Feedback Arc Set problem was shown to be \textsc{NP}-complete already in Karp's seminal paper~\cite{Kar72} 
and there is a long history of approximation algorithms for these problems.  Leighton and Rao~\cite{LR88} first gave a $O(\log^2 |V|)$-approximation algorithm. Seymour~\cite{Seymour95} improved the approximation guarantee by showing that a certain linear program approximates the value within a factor $O(\log |V| \log \log |V|)$. Seymour's arguments were then generalized by Even et al.~\cite{ENSS98} to obtain the best known approximation algorithms  achieving a factor $O(\log |V| \log \log |V|)$ 
even in weighted graphs. 

Motivated by certain VLSI design and communication problems, Paik et al.~\cite{Paik94} considered the DVD problem and  showed it  to be \textsc{NP}-complete on general graphs and polynomial time solvable on series-parallel graphs.
One can also see that DVD for a fixed $k$ is a special case of the Vertex Cover problem on $k$-uniform hypergraphs  and has a fairly straightforward $k$-approximation algorithm.
 
 In comparison to FVS, the DVD problem has received little attention
 and, although we think it is a natural problem, our main motivation
 for studying its approximability comes from its relationship (that we
 prove in Section~\ref{sec:timecost}) with the classical deadline
 version of the project scheduling problem known as the \emph{Discrete
   Time-Cost Tradeoff problem}.  Informally (see
 Section~\ref{sec:timecost} for a formal definition of the Deadline
 problem), this is the problem where we are given a deadline and a
 project consisting of tasks related by precedence constraints, and the
 time it takes to execute each task depends, by a given cost function,
 on how much we pay for it. The objective is to minimize the cost of
 executing all the tasks in compliance with the precedence constraints
 so that they all finish within the given deadline. Due to its obvious
 practical relevance, the problem has been studied in various contexts
 over the last 50 years (see the paper~\cite{Kelley59} by Kelly and
 Walker for an early reference).  Fulkerson~\cite{Fulk61} and
 Kelley~\cite{Kelley61} obtained polynomial time algorithms if all
 cost functions are linear. In contrast, the problem becomes
 \textsc{NP}-hard for arbitrary cost functions~\cite{De95} and there
 is even no known constant factor approximation algorithm in the
 general case. However, better (approximation) algorithms have been
 obtained for special cases. For example, Grigoriev and
 Woeginger~\cite{GW04} gave polynomial time algorithms for special classes
 of precedence constraints and one of several algorithms by
 Skutella~\cite{Skutella98} is a bicriteria approximation that, for
 any $\mu \in (0,1)$, approximates the Deadline problem within a
 factor $1/(1-\mu)$ if the deadline is allowed to be violated by a
 factor $1/\mu$.

In summary, there are no known constant approximation algorithms for
FVS, DVD, and the Deadline problem although few strong
inapproximability results are known. The best known
\textsc{NP}-hardness of approximation results follow from the fact that they
are all as hard to approximate as Vertex Cover which is
\textsc{NP}-hard to approximate within a factor
$1.3606$~\cite{Dinur04onthe}.  It is indeed easy to see that Vertex
Cover is a special case of FVS and DVD, and Grigoriev and
Woeginger~\cite{GW04} gave an approximation-preserving reduction from
Vertex Cover to the Deadline problem.  If we assume the Unique Games
Conjecture (UGC)~\cite{Khot02a}, our understanding of the approximability of FVS
becomes significantly better: the hardness of approximation result for
Maximum Acyclic Subgraph by Guruswami et al.~\cite{GHMRC11} implies
that it is \textsc{NP}-hard to approximate FVS within any constant
factor assuming the UGC. However, the results in~\cite{GHMRC11} use
very sophisticated techniques that are not known to imply a similar
hardness for DVD and the Deadline problem.

Even though the starting motivation of this work was to better
understand the approximability of the Deadline problem (and DVD), the
techniques that we develop also lead to a stronger UGC-based hardness
result for FVS: similar to the recent results for Vertex Cover on
$k$-uniform hypergraphs by Bansal and Khot~\cite{BK09,Bansal2010}, we show that, for any
integer $k\geq 2$ and arbitrarily small $\epsilon > 0$, there is no
$k-\epsilon$-approximation algorithm for FVS even on graphs where the
vertices can be almost partitioned into $k$ feedback vertex sets. Our
reduction is also much simpler than the one in~\cite{GHMRC11} (albeit
using the ``It Ain't Over Till It's Over'' theorem) but is tailored for
FVS and does not yield any inapproximability result for the Maximum
Acyclic Subgraph problem. More importantly, our techniques  also lead to  an analogous result for the DVD problem (and thereby the Deadline problem).
Formally, our results for the considered vertex deletion problems can be stated as follows.
\newpage
\begin{theorem}
\label{thm:vertmain}
Assuming the Unique Games Conjecture, for any integer $k \geq 2$ and arbitrary constant  $\epsilon>0$, the following problems are \textsc{NP}-hard:

\begin{description}
\item[{FVS:}] Given  a graph $G(V,E)$, distinguish between the following cases:
 \begin{itemize}
    \item \emph{(Completeness):} there exist disjoint subsets $V_1, \dots, V_k \subset V$ satisfying $|V_i| \geq \frac{1-\epsilon}{k} |V|$ and such that a subgraph induced by all but one of these subsets is acyclic.
        \item  \emph{(Soundness):} every feedback vertex set has size at least  $(1-\epsilon)|V|$.
 \end{itemize}
\item[DVD:] Given a DAG $G(V,E)$, distinguish between the following cases:
 \begin{itemize}
    \item \emph{(Completeness):} there exist disjoint subsets $V_1, \dots, V_k \subset V$ satisfying $|V_i| \geq \frac{1-\epsilon}{k} |V|$ and such that a subgraph induced by all but one of these subsets has no path of length $k$.
        \item  \emph{(Soundness):} every induced subgraph of  $\epsilon |V|$ vertices has a path of length  $|V|^{1-\epsilon}$.
 \end{itemize}

\end{description}
\end{theorem}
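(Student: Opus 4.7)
The plan is a long-code style reduction from Unique Games over label set $[R]$. For each UG variable $u$ I introduce a cloud $C_u$ of $k^R$ vertices indexed by $[k]^R$, and the total graph $G$ is assembled by adding ``local'' edges within each cloud and ``consistency'' edges across clouds that compose with the UG permutations $\pi_{uv}$. The intended good solutions correspond to dictator slices: given a labeling $\sigma$ of $\mathcal U$, define $V_i = \bigcup_u \{x \in C_u : x_{\sigma(u)} = i\}$ for $i \in [k]$. Each $V_i$ has density exactly $1/k$, and I will design the local and consistency edges so that if $\sigma$ satisfies a $(1-\eps)$-fraction of UG constraints then deleting any single $V_i$ yields an acyclic subgraph (for FVS) or a subgraph with no $k$-path (for DVD); the $\tfrac{1-\eps}{k}$ density bound comes from discarding the clouds incident to unsatisfied UG edges.

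For FVS I would use a shift-style gadget inside each cloud: directed edges $x \to x + e_j \pmod{k}$ close into a $k$-cycle along every coordinate axis $j \in [R]$, so removing the slice $\{x_{\sigma(u)} = i\}$ destroys every $\sigma(u)$-axis cycle inside $C_u$; the cross-cloud consistency edges are aligned with $\pi_{uv}$ so that if $\sigma$ satisfies $(u,v)$ no cycle crossing between $C_u$ and $C_v$ survives. For DVD the analogue is layered by a single coordinate: edges go from layer $\ell$ to layer $\ell+1$, any intra-cloud $k$-path traverses all layers, and a layer-slice is a $k$-path cover. To push the soundness path length from $k$ up to $|V|^{1-\eps}$, I would in addition chain the clouds into long directed ``paths of clouds'' along an expander-like template, so that any surviving $\eps$-fraction which is not essentially a dictator slice must support a path threading through $\Omega(|V|^{1-\eps})$ clouds.

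For soundness, suppose a deletion set $D$ of size at most $(1-\eps)|V|$ destroys all forbidden configurations, and let $f_u : [k]^R \to \{0,1\}$ indicate $C_u \setminus D$. Averaging, $\E_u \E[f_u] \geq \eps$, so a non-trivial fraction of clouds satisfy $\E[f_u] \geq \eps/2$. The ``It Ain't Over Till It's Over'' theorem of Mossel, O'Donnell and Oleszkiewicz asserts that if every low-degree coordinate influence of $f_u$ is below a threshold $\tau = \tau(\eps,k)$, then a random $(1{-}\eta)$-fraction restriction of the coordinates still leaves $f_u$ strictly positive with constant probability; combined with the shift/layer gadget and the cross-cloud consistency edges, this produces a surviving forbidden cycle (resp.\ long path), a contradiction. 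Hence each such $f_u$ must contain a coordinate of influence $\geq \tau$, and the standard UG decoding assigns to $u$ one of these high-influence coordinates, producing a labeling that satisfies a constant fraction of UG constraints and contradicting UG soundness.

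The main obstacle I anticipate is the DVD soundness length. The IAOTIO argument naturally produces a forbidden configuration within a single cloud, which suffices for cycles and for constant-length paths, but extending to paths of length $|V|^{1-\eps}$ requires the indicator functions to be coherent across many clouds simultaneously. This means the cross-cloud template cannot be a generic UG graph -- it has to be a long directed backbone -- and the IAOTIO invocation has to be chained (or replaced by its noise-operator version) so that surviving vertices in successive clouds can be concatenated into one long path rather than many short fragments; orchestrating this chaining while preserving the clean partition into $k$ dictator slices in the completeness case is the delicate design step.
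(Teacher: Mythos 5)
Your high-level plan (long-code clouds, dictator slices, IAOTIO to extract influential coordinates, decode into a UG labeling) is the standard template and matches the paper's skeleton, but the concrete gadget you describe has a genuine flaw for FVS, and you leave the DVD path-length step unresolved; the paper handles both with a quite different device.

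The FVS gadget you propose does not satisfy completeness. You put directed edges $x \to x + e_j \pmod k$ for \emph{every} coordinate $j$, which creates $k$-cycles along every axis of the cube $[k]^R$. Deleting the slice $\{x : x_{\sigma(u)} = i\}$ destroys only the cycles along axis $\sigma(u)$; the cycles along every other axis $j \neq \sigma(u)$ are untouched. Since the gadget must be fixed independently of which coordinate is the dictator, you cannot restrict the shift edges to the dictator axis, so the slices $V_0,\dots,V_{k-1}$ do not individually make the cloud acyclic. The paper avoids this entirely by \emph{not} putting any edges among the cube points. Instead it introduces, for each point $x$ and each random $\epsilon R$-tuple $S$ of coordinates, a separate \textbf{test-vertex} $t_{x,S}$ that receives an arc from every bit-vertex $b_z$ with $z$ in the sub-cube $C_{x,S}$ and sends an arc to every $b_z$ with $z$ in the $\oplus 1$-shifted sub-cube $C^\oplus_{x,S}$; bit-vertices have weight $\infty$ (never deleted) and test-vertices are the actual deletion targets. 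In the dictator slice, when $s \notin S$ the whole sub-cube $C_{x,S}$ shares the same value $x_s$, so $t_{x,S}$ has all predecessors in $B_{x_s}$ and all successors in $B_{x_s \oplus 1}$; deleting the class $T_j$ then leaves a graph with no $B_j \to B_{j\oplus 1}$ passage, hence no cycle. It is this sub-cube/test-vertex structure, not an axis-shift gadget, that makes both completeness and the IAOTIO soundness step go through: a surviving test-vertex in a DAG must be either after all its in-cube predecessors or before all its out-cube successors in a topological sort, which forces the indicator function of the ``later half'' of bit-vertices to be constant on one of the two sub-cubes.

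For the DVD path length you correctly identify the obstacle but propose an expander backbone and ``chained'' IAOTIO, which is not needed and not what the paper does. The paper simply replicates the same bipartite gadget across $L$ layers (bit-layers $0,\dots,L$ and test-layers $0,\dots,L-1$, with arcs only forward), and chooses $L$ large enough that $\delta L \geq |T|^{1-\delta}$. Soundness is then a counting argument: reformulate a solution as a $k$-coloring of bit-vertices that is monotone across layers and that ``satisfies'' a test-vertex when all its predecessors get strictly smaller colors than its successors; IAOTIO shows (via Claim~3.1) that in any test-layer where a $\Theta(\delta)$ fraction of test-vertices survive, the typical color must strictly increase from layer $\ell$ to $\ell+1$; if a $\Theta(\delta)$ fraction of layers have this property, either some function has an influential coordinate (giving a UG labeling) or the coloring needs $\Omega(\delta L)$ colors, i.e.\ there is a path of length $\Omega(\delta L) \geq |T|^{1-\delta}$. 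No backbone, no coherence across clouds, no noise operator — just layer replication and a potential function.
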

Note that in the completeness cases, letting $V' = V \setminus (V_1 \cup \dots \cup V_k )$, the
sets $V' \cup V_i$ for  $i = 1, \dots , k$ are almost disjoint solutions
of size at most $( \frac{1}{k} + \epsilon)|V|$ each. In contrast, any solution basically needs to delete all vertices in the soundness case 
(even to avoid paths of length $|V|^{1-\epsilon}$ for DVD).

When proving UGC-based  inapproximability results, the main
task is usually to design ``gadgets'' of the considered problems that
simulate a so-called dictatorship test.  Once we have such
``dictatorship gadgets'', the process of obtaining  UGC-based
hardness results often follows from (by now) fairly standard
arguments. In particular, the main ideas needed for our reductions leading to
Theorem~\ref{thm:vertmain} are already present in the design of the gadgets.
We have therefore chosen to present those gadget constructions with
less cumbersome notation in 
Section~\ref{sec:dictator} and the reductions from Unique Games can
be found in Section~\ref{sec:ughardness}.

As alluded to above, our main interest in DVD stems from its relationship with the Deadline problem. More specifically, in Section~\ref{sec:timecost}, 
we give an approximation-preserving reduction from DVD to the Deadline problem that combined with Theorem~\ref{thm:vertmain} yields:
\begin{theorem}
Conditioned on the Unique Games Conjecture, for every $C>0$, it is \textsc{NP}-hard to find a $C$-approximation to the Deadline problem.
\end{theorem}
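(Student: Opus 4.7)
The plan is to combine an approximation-preserving reduction from DVD to the Deadline problem with the DVD part of Theorem~\ref{thm:vertmain}. Since that theorem already provides a factor-$(k-\epsilon)$ hardness gap for DVD for every fixed integer $k\geq 2$, it suffices to design a reduction that transports this gap faithfully; letting $k\to\infty$ then gives hardness within any fixed constant $C$.

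Concretely, given a DVD instance $(G=(V,E),k)$, I would build a Deadline instance whose precedence DAG is $G$ itself and whose tasks each have the two-point cost function $c_v(1)=0$ and $c_v(0)=1$ (each task may be processed in unit time for free, or ``crashed'' to zero time at cost $1$), with deadline $D=k-1$. Because processing times lie in $\{0,1\}$, the makespan along any precedence path equals the number of vertices on that path executed at unit time, so a feasible schedule corresponds bijectively to a set $S\subseteq V$ of crashed vertices with the property that the induced subgraph $G[V\setminus S]$ contains no path of length $k$ -- precisely a DVD solution of cost $|S|$.

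With this bijection in hand, the two cases of Theorem~\ref{thm:vertmain} transfer at once. In the completeness case, taking $S = V'\cup V_i$ for any $i$ (where $V'=V\setminus(V_1\cup\cdots\cup V_k)$) yields a feasible schedule of cost at most $(1/k+\epsilon)|V|$. In the soundness case, the crashed set $S$ of any feasible schedule must satisfy $|V\setminus S| < \epsilon|V|$, since otherwise $G[V\setminus S]$ would contain a path of length $|V|^{1-\epsilon}\geq k$; hence the cost of any feasible schedule is at least $(1-\epsilon)|V|$. Given any constant $C$, choosing an integer $k > 2C$ and $\epsilon$ small enough makes the ratio $(1-\epsilon)/(1/k+\epsilon)$ exceed $C$, and the conclusion follows.

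The main point requiring care -- essentially the only nontrivial ingredient -- is ensuring that the formal definition of the Deadline problem used in Section~\ref{sec:timecost} actually admits the two-point cost functions above, and in particular zero processing times. If the formal model requires strictly positive processing times, I would replace each ``$0$'' by a tiny $\delta > 0$ and compensate by setting $D = (k-1) + n\delta$; since the completeness solution still meets this relaxed deadline and the soundness argument still forces at least $(1-\epsilon)|V|$ tasks to be crashed, the gap is preserved up to negligible terms. Beyond these definitional bookkeeping matters, no new combinatorial content is needed -- the hardness comes entirely from Theorem~\ref{thm:vertmain}.
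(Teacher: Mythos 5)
Your reduction does not preserve the DVD constraint, and the claimed bijection is false. If the precedence DAG is $G$ itself, each job has duration $0$ or $1$, and the deadline is $k-1$, then a crash set $S$ is feasible iff \emph{every} directed path $P$ of $G$ satisfies $|P\setminus S|\le k-1$. This is strictly stronger than the DVD requirement that $G[V\setminus S]$ contain no $k$-vertex path: crashing a vertex to duration $0$ does not sever the precedence path through it, whereas deleting the vertex in DVD does. Concretely, on the path $a\to b\to c\to d\to e$ with $S=\{c\}$ and $k=3$, the induced subgraph $G[V\setminus S]$ has only $2$-vertex paths, so $S$ is DVD-feasible, yet the schedule that crashes only $c$ has makespan $4>k-1$.

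This breaks your completeness step. In the DVD instances produced by Theorem~\ref{thm:vertmain}, crashing $V'\cup V_i$ leaves an induced subgraph with no $k$-vertex path, but the \emph{ambient} DAG still contains very long directed paths that alternate between deleted vertices (e.g.\ bad test-vertices in $T'$) and non-deleted test-vertices; such a path can carry on the order of $L$ non-crashed vertices, far more than $k-1$. Hence the schedule you describe does not meet the deadline, the cheap side of the gap vanishes, and no inapproximability follows. Soundness is unaffected (any feasible schedule is a fortiori a DVD solution), and zero or near-zero durations are not the issue you need to worry about.

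The paper handles precisely this difficulty by making feasibility depend only on runs of \emph{consecutive} non-crashed vertices along a path. Auxiliary jobs $l_i,r_i$ pin each $m_i$ to the window $[i,i+1)$, and for each arc $(i,j)$ an activity $a_{(i,j)}$ of duration $j-i-9/10+\tfrac{1}{10(k-1)}$ propagates a tiny slack increment from $m_i$ to $m_j$ only when $m_i$ runs at its long duration $9/10$; crashing $m_i$ to duration $0$ resets the slack. The deadline is then violated exactly when some $k$ consecutive vertices of a path in $G$ are all uncrashed, which coincides with the DVD condition, so the reduction is approximation-preserving. Your high-level plan of combining Theorem~\ref{thm:vertmain} with an approximation-preserving reduction and letting $k\to\infty$ is indeed the paper's, but the reduction itself needs the slack gadget; it cannot simply reuse $G$ as the precedence order with two-point cost functions.
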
 

This explains the difficulty in obtaining good approximation
algorithms for the Deadline problem and also further motivates
alternative approaches such as the bicriteria approach by
Skutella~\cite{Skutella98} that approximates the Deadline problem
within a constant if the deadline is allowed to be violated by a
constant factor.

\section{Preliminaries}
\label{sec:prelim}
\subsection{Low Degree Influence  and ``It Ain't Over Till It's Over'' Theorem}

Let $[k] = \{0,1, \dots, k-1\}$. When analyzing our hardness reductions, we shall use known properties
regarding the behavior of functions of the form $f: [k]^R \mapsto
\{0,1\}$ depending on whether they have influential co-ordinates. Similar to~\cite[Section~$3$]{MOO10}, we define the influence of the $i$-th co-ordinate  by
$$
\Inf_i(f) = \E_x [ \mbox{Var}(f)| x_1, \dots, x_{i-1}, x_{i+1}, \dots, x_R].
$$ We note that if $f: \{-1,1\}^R \mapsto \{-1,1\}$ then this definition coincides with the intuitive expression  $\Pr_x[ f(x_1,\dots, x_i, \dots, x_R) \neq f(x_1, \dots, -x_i, \dots, x_R)]$.

It is well known that if we let $f = \sum_\Phi \hat f(\phi) X_\phi$ be the multi-linear representation of $f$ (where, analogous of the
standard Fourier representation, the characters $(X_\phi)_{\phi\in
  [k]^R}$ define an orthonormal basis of the vector space of all functions $[k]^n \mapsto \mathbb{R}$) then  the influence can also be expressed as
$$
\Inf_i(f) = \sum_{\phi: \phi_i \neq 0} \hat f^2(\phi),
$$ which motivates the following definition of the degree
$d$-influence of the $i$-th co-ordinate:
$$
\Inf_i^d(f) = \sum_{\phi: \phi_i \neq 0, |\phi|\leq d} \hat f^2(\phi).
$$ As we shall not work directly with these definitions or with the
multi-linear representation, we refer the reader to~\cite{MOO10} for
the precise definitions and  cut the discussion short by mentioning the
property of low degree influence that shall be crucial to us (which
follows from that $\sum_{\phi}\hat f^2(\phi) = \E_x[f(x)^2] \leq 1$).

\begin{observation}
\label{obs:lowdeg}
For a boolean function $f: \{0,1\}^R \mapsto \{0,1\}$,
the sum of all degree $d$-influences is at most $d$.
\end{observation}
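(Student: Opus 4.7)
The plan is to swap the order of summation in
\[
\sum_{i=1}^{R} \Inf_i^d(f) = \sum_{i=1}^{R} \sum_{\substack{\phi:\,\phi_i\neq 0\\ |\phi|\leq d}} \hat f^2(\phi),
\]
and observe that for each multi-index $\phi$ with $|\phi|\leq d$, the number of indices $i$ with $\phi_i\neq 0$ is exactly $|\phi|$. Hence the double sum equals $\sum_{\phi:\,|\phi|\leq d} |\phi|\cdot \hat f^2(\phi)$.

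Next I would bound $|\phi|\leq d$ to pull $d$ outside, obtaining
\[
\sum_{i=1}^{R} \Inf_i^d(f) \;\leq\; d\sum_{\phi:\,|\phi|\leq d} \hat f^2(\phi) \;\leq\; d\sum_{\phi} \hat f^2(\phi).
\]
Finally I would invoke the Parseval-type identity mentioned in the excerpt, namely $\sum_{\phi} \hat f^2(\phi) = \E_x[f(x)^2]$, together with the fact that $f$ is $\{0,1\}$-valued so $f(x)^2 = f(x) \leq 1$, to conclude that the sum is at most $d$.

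There is essentially no obstacle: the argument is a one-line double-counting computation once we take the Fourier-type expansion of $\Inf_i^d(f)$ as given. The only implicit ingredients are (i) the expression of $\Inf_i^d(f)$ as a sum of squared coefficients restricted to multi-indices touching coordinate $i$ and of weight at most $d$, and (ii) Parseval's identity; both are stated (or attributed to~\cite{MOO10}) just before the observation, so nothing else is needed.
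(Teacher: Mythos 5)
Your argument is correct and is exactly the double-counting computation the paper is alluding to when it says the observation ``follows from that $\sum_{\phi}\hat f^2(\phi) = \E_x[f(x)^2] \leq 1$.'' The paper does not spell out the swap of summation or the bound $|\phi|\leq d$, but your filling in of those steps is the standard and intended route.
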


We shall now introduce a simplified version of the ``It Ain't Over
Till It's Over'' theorem that is sufficient for the applications in
this paper. The first proof was given  by Mossel et al.~\cite{MOO10}
and a more combinatorial proof of a simplified version (very similar
to the one used here) was given  by Bansal and Khot~\cite{BK09} who
used it to prove tight inapproximability results for Vertex Cover and a
classical single machine scheduling problem. In fact many of our ideas are inspired from~\cite{BK09}.   For $x\in [k]^R$ and a subsequence $S_\epsilon=(i_1, \dots,
   i_{\epsilon R})$ of $\epsilon R$ not necessarily distinct indexes
   in $[R]$, let
$$
C_{x,S_\epsilon} =  \{z \in [k]^R: z_j = x_j \mbox{ }\forall j\not \in S_\epsilon\}
$$ denote the sub-cube defined by fixing the co-ordinates not in
$S_\epsilon$ according to $x$. Let also $f(C_{x,S_\epsilon}) \equiv 0$
denote the expression that $f$ is identical to $0$ on the sub-cube
$C_{x,S_\epsilon}$.

\begin{theorem}
\label{thm:soundsingle}
For every $\epsilon, \delta >0$ and integer $k$, there exists $\eta >0$ and integer
$d$ such that any $f: [k]^R \mapsto \{0,1\}$ that satisfies
$$
\E[f] \geq \delta  \qquad \mbox{and}  \qquad \forall i \in [R], \Inf_{i}^d(f) \leq \eta,
$$
has
$$
\Pr_{x,S_\epsilon}\left[ f(C_{x,S_\epsilon}) \equiv 0\right] \leq \delta.
$$
\end{theorem}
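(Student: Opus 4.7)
The plan is to follow the standard route for ``It Ain't Over Till It's Over''--style theorems: express the event as a statement about the distribution of a random sub-cube average, and use the low-influence hypothesis to compare this distribution to a Gaussian via the invariance principle.

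First I would introduce the sub-cube average $\bar f(x, S) = \E_y[f(z)]$, where $z$ agrees with $x$ on coordinates outside $S$ and with an independent uniform $y$ inside $S$. Because $f \in \{0,1\}$, the event $f(C_{x,S_\epsilon}) \equiv 0$ is identical to $\bar f(x, S_\epsilon) = 0$, and by Fubini $\E_{x, S_\epsilon}[\bar f(x, S_\epsilon)] = \E[f] \geq \delta$. In Fourier coordinates, $\bar f(\cdot, S_\epsilon)$ kills every mode $\phi$ whose support meets $S_\epsilon$; averaging over a random $S_\epsilon$ therefore damps each mode of degree $t$ by roughly $(1-\epsilon)^t$ in $L^2$. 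Consequently, truncating $f$ at Fourier degree $d$ loses at most $(1-\epsilon)^{2d}$ in $L^2$ after the sub-cube average and yields a bounded low-degree function $f^{\leq d}$ whose coordinate influences are all at most $\eta$ by hypothesis.

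Next I would invoke the Mossel--O'Donnell--Oleszkiewicz invariance principle to conclude that the pushforward distribution of $\bar f^{\leq d}(x, S_\epsilon)$ is close, in L\'evy distance, to a Gaussian with matching mean (at least $\delta - o(1)$) and bounded variance, where the error tends to $0$ as $\eta \to 0$ with $d$ fixed. Since such a Gaussian places at most $O(c)$ mass in any window $(-c, c)$ about $0$, choosing $c = c(\delta)$ small, then $d = d(\epsilon, c)$ large, then $\eta = \eta(c, d, k)$ small yields $\Pr[\bar f^{\leq d}(x, S_\epsilon) \in (-c, c)] < \delta/2$; combining with the $L^2$-closeness of $\bar f$ to $\bar f^{\leq d}$ delivers the stated bound.

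The main obstacle is that the indicator of the event $\{\bar f = 0\}$ is not a Lipschitz test function, so invariance does not apply to it directly. The standard workaround (used by Mossel et al.\ and, in the more combinatorial form cited here, by Bansal--Khot) is to smooth the indicator into a Lipschitz bump of width $c$ supported near $0$ and balance $c$ against the invariance error, controlling the Gaussian mass of the bump by anti-concentration. A secondary technical point is to track the order of quantifiers so that the final $\eta$ and $d$ depend only on $\epsilon$, $\delta$, and $k$ (and not on $R$), which is fine because the invariance bound is dimension-free for low-degree low-influence functions.
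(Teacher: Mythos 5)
First, note that the paper does not actually prove Theorem~\ref{thm:soundsingle}: it states it as a simplified form of the ``It Ain't Over Till It's Over'' theorem, attributing the analytic proof to Mossel, O'Donnell and Oleszkiewicz~\cite{MOO10} and a combinatorial proof of essentially this exact formulation (random sub-cubes, low-degree influences) to Bansal and Khot~\cite{BK09}. So there is no in-paper proof for your sketch to match; what follows is an assessment of your argument on its own merits.

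The setup (passing to $\bar f(x,S)=\E_{z\sim C_{x,S}}[f(z)]$, observing that $f|_{C_{x,S}}\equiv 0$ is the same as $\bar f(x,S)=0$, and running a low-degree/low-influence invariance argument) is reasonable and in the spirit of~\cite{MOO10}, but the central step has a genuine gap. You assert that invariance makes the law of $\bar f^{\le d}$ close in L\'evy distance to \emph{a Gaussian} with mean $\ge\delta$ and then use the fact that a Gaussian puts only $O(c)$ mass in a window of width $2c$. Invariance does not produce a Gaussian: it says the law of a low-degree, low-influence multilinear polynomial on the discrete cube is close to the law of the \emph{same polynomial evaluated on Gaussian inputs}. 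For degree $d\ge 2$ that is a polynomial of Gaussians, which is generally not Gaussian and need not have bounded density near $0$ (think of $g^2$). The correct small-ball estimate for such polynomials is the Carbery--Wright bound, which has the form $\Pr[|P(g)|\le c]\le O(d)\,(c/\|P\|_2)^{1/d}$, with exponent $1/d$ rather than $1$. This is not a cosmetic substitution: a $c^{1/d}$ rate forces $c$ to be exponentially small in $d$, while your Chebyshev control of the Fourier tail $\bar f^{>d}$ requires $e^{-\epsilon d}\lesssim\delta\,c^2$; the two requirements together force $\epsilon\gtrsim\log(d/\delta)$, which fails for small $\epsilon$. In particular your stated order of quantifiers (choose $c=c(\delta)$, then $d=d(\epsilon,c)$, then $\eta$) cannot be rescued just by inserting the right anti-concentration inequality. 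The actual proof in~\cite{MOO10} sidesteps both problems: it applies invariance to a Lipschitz test composed with a \emph{noised} version of the function (whose Fourier tail is geometrically small, so no hard degree cutoff is needed) and then bounds the resulting Gaussian-space quantity by a Borell-type extremal argument over $[0,1]$-valued functions of a given mean, not by a density or small-ball bound; the combinatorial proof in~\cite{BK09} avoids invariance altogether.
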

Here and throughout the paper, the probability over $x,S_\epsilon$ is
such that $x$ and $S_\epsilon$ are taken independently and uniformly
at random. When $\epsilon$ is clear from the context we often also abbreviate $S_\epsilon$ by  $S$. Note that the theorem says that a reasonably balanced
function with no low degree influential co-ordinates has very low
probability to be identical to $0$ over the random choice of
sub-cubes. In contrast, it is easy to see that a dictatorship function (on the boolean domain)
$f(x) = x_s$, for some $s$, has $\Pr_{x,S_\epsilon}\left[
  f(C_{x,S_\epsilon}) \equiv 0\right] = \Pr_{x,S_\epsilon}\left[
  f(C_{x,S_\epsilon}) \equiv 1\right] \geq 1/2-\epsilon$. It is this
drastic difference that we will exploit in our hardness reductions.

\subsection{Unique Games Conjecture}
 An instance of Unique Games $\mathcal{L} = (G(V,W,E),[R],
 \{\pi_{v,w}\}_{(v,w)})$ consists of a regular bipartite graph
 $G(V,W,E)$ and a set $[R]$ of labels. For each edge $(v,w) \in E$ there is a constraint
 specified by a permutation $\pi_{v,w} : [R] \mapsto [R]$. The goal is
 to find a labeling $\rho : (V\cup W) \mapsto [R]$ so as to maximize
 $val(\rho) := \Pr_{e\in E} [ \rho \mbox{ satisfies } e]$, where a
 labeling $\rho$ is said to satisfy an edge $e=(v,w)$ if $\rho(v) =
 \pi_{v,w}(\rho(w))$. For a Unique Games instance $\mathcal{L}$, we let
 $OPT(\mathcal{L}) = \max_{\rho: V\cup W \mapsto [R]} val(\rho)$. The
 now famous Unique Games Conjecture that has been extensively used to
 prove strong hardness of approximation results can be stated as
 follows.

\begin{conjecture}[\cite{Khot02a}]
\label{conj:ugc}
  For any constants $\zeta, \gamma >0$, there is a sufficiently large
  integer $R= R(\zeta, \gamma)$ such that, for  Unique Games instances
   $\mathcal{L}$ with label set $[R]$ it is NP-hard to distinguish between:
\begin{itemize}
\item \emph{(Completeness):} $OPT(\mathcal{L})\geq 1-\zeta$.
\item \emph{(Soundness):} $OPT(\mathcal{L}) \leq
  \gamma$.
\end{itemize}
\end{conjecture}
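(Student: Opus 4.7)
The statement is the Unique Games Conjecture of Khot, which remains a celebrated open problem in hardness of approximation. My plan, therefore, is to outline what a proof attempt might look like rather than to claim a proof exists. The natural starting point is a gap-preserving reduction from a canonical NP-hard gap problem --- such as Label Cover with arbitrary projection constraints, obtained from the PCP theorem together with parallel repetition --- to Unique Games, with the extra requirement that the resulting constraints be permutations rather than general projections. Concretely, one would replace each Label Cover vertex by a ``cloud'' of $R$-sized blocks and insert permutation constraints between these clouds so that a good Label Cover labeling translates to a $(1-\zeta)$-satisfying unique labeling, while conversely any $\gamma$-satisfying unique labeling can be decoded back to a Label Cover assignment of nontrivial value.

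A more modern attack route goes through the $2$-to-$2$ Games Theorem of Khot, Minzer, and Safra, proved via an intricate analysis of expansion and pseudorandomness in the Grassmann graph. Starting from $2$-to-$2$ hardness with imperfect completeness, one would attempt to compose the outer PCP with a ``uniqueness-forcing'' inner gadget --- for instance, a long-code based tester that penalizes labelings consistent with more than one preimage per vertex --- collapsing the $2$-to-$2$ structure into a $1$-to-$1$ one. A complementary strategy would be to design a uniqueness-preserving variant of parallel repetition, since standard parallel repetition immediately destroys the permutation property of the constraints and hence cannot be used to amplify a weak initial gap into the $(1-\zeta, \gamma)$ gap demanded by the conjecture.

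The main obstacle is that uniqueness is extraordinarily rigid: essentially all standard gap-amplification tools either break the permutation structure of the constraints or are only known to yield $2$-to-$2$ or $2$-to-$1$ hardness. In particular, the Grassmann-graph machinery that drives the best known results produces an outer PCP whose natural constraint structure is $2$-to-$2$, and the known ``expansion vs.\ pseudorandomness'' characterizations do not appear sufficient to rule out the non-expanding sets that one would need to avoid in order to force uniqueness. Any genuine proof would almost certainly require either a fundamentally new PCP construction or a new analytic handle on non-expanding sets in Grassmann-like graphs. Since the present paper uses the statement only as a working hypothesis, the realistic plan here is to take it as given and proceed with the reductions of Section~\ref{sec:ughardness} that rely on it.
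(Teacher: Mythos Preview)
Your assessment is correct: the statement is Khot's Unique Games Conjecture, which the paper states without proof and uses purely as a hypothesis for the hardness reductions in Section~\ref{sec:ughardness}. There is nothing to compare against, since the paper gives no proof (or proof sketch) of the conjecture itself; your observation that it should be taken as given is exactly how the paper treats it.
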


\section{Dictatorship Gadgets for Vertex Deletion Problems}
\label{sec:dictator}
We give fairly simple gadgets of the considered vertex deletion
problems that informally corresponds to a dictatorship test in the
following sense: (Completeness:) any dictatorship function $f:[k]^R
\mapsto [k]$ (defined by $f(x) = x_s$ for some $s\in [R]$) corresponds
to a good solution whereas (Soundness:) any non-trivial solution
corresponds to a function $f:[k]^R \mapsto \{0,1\}$ with a high
influence co-ordinate. By fairly standard arguments, these gadgets are then used in Section~\ref{sec:ughardness}  to obtain analogous hardness results assuming
the Unique Games Conjecture.

Throughout this section, we fix  $k$ to be an integer, $\epsilon, \delta >0$ to be arbitrarily small constants, and let $\eta$ and $d$ be as in Theorem~\ref{thm:soundsingle} (depending on $k,\epsilon$ and $\delta$).

\subsection{Feedback Vertex Set}
\label{sec:hardfvs}
We shall here describe a graph $G= (V,E)$ that naturally corresponds to a dictatorship test in the following sense:
 \begin{itemize}
 \item \emph{(Completeness:)} A dictatorship function partitions the vertex set into subsets
   $V', V_0, \dots, V_{k-1}$ satisfying $V_j \geq \frac{1-\epsilon }{k}
   |V|, |V'| \leq \epsilon  |V|$, and for $j\in [k]$ the graph obtained by deleting  $V' \cup V_j$  is acyclic.
\item  \emph{(Soundness:)} Any feedback vertex set that deletes less than  $(1-2\delta)|V|$ vertices corresponds to a  function $f:[k]^R \mapsto \{0,1\}$ with a co-ordinate $i$ so that $\Inf_i^d(f) > \eta$.
\end{itemize}

\subsubsection{Dictatorship Gadget}
To make the analysis more intuitive, it will be convenient to first present
a gadget that consists of two types of vertices that we refer to as
\emph{bit-vertices} and \emph{test-vertices} and all arcs are between
bit- and test-vertices:
\begin{itemize}
\item There is a bit-vertex  $b_x$ of weight $\infty$ for every $x\in   [k]^R$. 

\item There is a test-vertex $t_{x,S}$ of weight $1$ for every $x\in [k]^{R}$
  and every sequence  $S= (i_1, \dots, i_{\epsilon R}) \in [R]^{\epsilon  R}$ of $\epsilon R$ not necessarily distinct indices.

\item  The arc incident to a test-vertex $t_{x,S}$ are the
  following. There is an arc $(b_z, t_{x,S})$ if $z\in C_{x,S}$ and an arc $(t_{x,S}, b_z)$ if $z \in C^\oplus_{x,S}$, where 
  $$
  C^\oplus_{x,S} =  \{z \oplus 1: z\in C_{x,S}\}
  $$
  (here $\oplus$ denotes addition mod $k$).
\end{itemize}

As the bit-vertices have weight $\infty$, they will never be deleted
in an optimal solution. We can therefore obtain an unweighted graph
$G$ of same optimal value by omitting the bit-vertices and having an
arc $(t_{x,S}, t_{x',S'})$ between two test vertices if there exists a
bit-vertex $b_z$ so that $(t_{x,S}, b_z)$ and $(b_z, t_{x',S'})$.  The
vertex set of $G$ will therefore correspond to the set $T$ of
test-vertices. The analysis of $G$ therefore follows from proving that
(completeness:) any dictatorship function partitions the test-vertices
as required (Section~\ref{sec:fvsgadcomp}) and (soundness:) that any
solution that deletes less than a fraction $1-2\delta$ of the
test-vertices corresponds to a function with a co-ordinate of high
influence (Section~\ref{sec:fvsgadsound}).

\subsubsection{Completeness}
\label{sec:fvsgadcomp}
We show that a dictatorship function $f: [k]^R \mapsto [k]$ of index
$s$ naturally partitions the test-vertices into subsets $T', T_0,
\dots, T_{k-1}$ satisfying $T_j \geq \frac{1-\epsilon }{k} |T|, |T'|
\leq \epsilon  |T|$, and such that the sets $T' \cup T_j$ for $j\in [k]$ are
almost disjoint feedback vertex sets of size at most $(\frac{1}{k} +
\epsilon )|T|$ each.

As $f(x) = x_s$, it partitions the bit-vertices in $k$ equal
sized sets
$$
B_j = \{b_x: f(x) = j\} \qquad \mbox{for} \qquad j\in [k].
$$

We say that a test-vertex $t_{x,S}$ is good if $s\not \in S$ and partition
the good test-vertices into $k$ equal sized sets
$$
T_j = \{t_{x,S} : s\not \in S \mbox{ and } f(x) = j\} \qquad \mbox{ for } \qquad j\in [k].
$$ The sets are of equal size since they are partitioned according to
$x$ and whether a test-vertex is good only depends on
$S$. Furthermore, as at least a fraction $1-\epsilon$ of the
test-vertices are good we have that $|T_j| \geq \frac{1-\epsilon }{k}
|T|$ for $j\in [k]$ and therefore the remaining test-vertices in $T'$
are at most $\epsilon  |T|$ many.

It remains to show that $T_j \cup T'$ defines a feedback vertex set
for any $j\in [k]$. The key observation is that $T_j$ only have
incoming edges from bit-vertices in $B_j$ and outgoing edges to
bit-vertices in $B_{j\oplus 1}$. Indeed, consider a test-vertex
$t_{x,S}\in T_j$ and an arc $ (b_z, t_{x,S})$. By definition we have
that $z \in C_{x,S}$ and as $S$ is good we have that $f(z) = f(x)
= j$, which implies that $z \in B_j$. The exact same argument implies
that $t_{x,S}$ only has outgoing edges to $B_{j\oplus 1}$.

The graph obtained by deleting all bad test-vertices and one of the
sets $T_0, T_1, \dots, T_{Q-1}$ is therefore acyclic as required.

\subsubsection{Soundness}
\label{sec:fvsgadsound}

Let $A$ be the last $1/2$ fraction of the bit-vertices according to a
topological sort of the graph.  Let $f_A$ be the indicator function of
$A$. Note that a test-vertex $t_{x,S}$ has incoming arcs from all
bit-vertices in $C_{x,S}$ and outgoing arcs to all bit-vertices in
$C^{\oplus}_{x, S}$. Therefore, if a test-vertex $t_{x,S}$ is not
deleted then we must have that either $f_A$ is identical to $0$ on
$C_{x,S}$ (if $t_{x,S}$ is placed before the last bit-vertex for which
$f_A$ evaluates to $0$) or identical to $1$ on $C^\oplus_{x,S}$ (if
$t_{x,S}$ is placed after the last bit-vertex for which $f_A$
evaluates to $0$) depending on where $t_{x,S}$ is placed according to
the topological sort.

As $\E[f_A] = 1/2$, we have by Theorem~\ref{thm:soundsingle} that if $\Inf_i^d(f_A) \leq \eta$ for all $i \in [R]$ then
$$
\Pr_{x,S}[ f(C_{x,S}) \equiv 0] \leq \delta
$$
and
$$
\Pr_{x,S} [ f(C^\oplus_{x,S}) \equiv 1] = \Pr_{x,S} [ f(C_{x,S}) \equiv 1] =\Pr_{x,S} [(1-f)(C_{x,S}) \equiv 0]  \leq  \delta.
$$

Therefore, if the solution does not correspond to a function with a co-ordinate of high  low-degree influence it must have deleted at least a fraction $1-2\delta$ of the test-vertices.

\subsection{Dag Vertex Deletion Problem}
\label{sec:harddvdp}

 We shall describe a directed acyclic graph
 (DAG) $G=(V,E)$ that naturally corresponds to dictatorship test in
 the following sense: 
 \begin{itemize}
 \item \emph{(Completeness:)} A dictatorship function partitions the vertex set into subsets
   $V', V_0, \dots, V_{k-1}$ satisfying $V_j \geq \frac{1-\epsilon }{k}
   |V|, |V'| \leq \epsilon  |V|$, and such that for $j\in [k]$ the graph obtained by deleting  $V' \cup V_j$  has no path of length $k$.
 \item \emph{(Soundness:)} Any graph obtained by deleting less than $(1-6\delta)|V|$  vertices either has a path of length $|V|^{1-\delta}$ or  corresponds to a function  $f:
   [k]^R \mapsto \{0,1\}$ with a co-ordinate $i$ such that
   $\Inf_i^d(f) > \eta$.
 \end{itemize}

\subsubsection{Dictatorship Gadget}
 As in Section~\ref{sec:hardfvs}, it will be convenient to first
 present a gadget that consists of two types of vertices that we refer
 to as \emph{bit-vertices} and \emph{test-vertices}, and all edges
 will be between bit- and test-vertices:
\begin{itemize}
\item The bit-vertices are partitioned into $L+1$ bit-layers ($L$ is selected below). Each bit-layer $\ell=0, \dots, L$ contains a bit-vertex $b^{\ell}_{x}$ of weight $\infty$ for every $x\in [k]^R$.

\item Similarly, the test-vertices are partitioned into $L$
  test-layers.  Each test-layer $\ell = 0, \dots, L-1$ has a
  test-vertex $t^\ell_{x,S}$ of weight $1$ for every $x\in [k]^{R}$
  and every sequence of indices $S= (i_1, \dots, i_{\epsilon R}) \in [R]^{\epsilon R}$.

\item The arcs are the following: there is an arc $(b^\ell_z, t^{\ell'}_{x,S})$ if $\ell\leq \ell'$ and $z\in C_{x,S}$, and there is an arc $(t^{\ell'}_{x,S}, b^\ell_z)$ if $\ell>\ell'$ and $z\in C^\oplus_{x,S}$.
\item Finally, $L$ is selected so as $\delta  L \geq |T|^{1-\delta}$, where $T$ is the set of test-vertices.
\end{itemize}

Note that, as there are only arcs from a bit-layer $\ell$ to a
test-layer $\ell'$ if $\ell' \geq \ell$ and only arcs from a
test-layer $\ell'$ to a bit-layer $\ell$ if $\ell> \ell'$, the
constructed graph is acyclic. Similar to the gadget for FVS, the
bit-vertices can be omitted to obtain an unweighted graph $G$ (with the set $T$ of test-vertices as vertices) with the
same optimal value by having an arc between two test-vertices if there
was a path between them through one bit-vertex. Note that a path in $G$ of length $k$ is a path in the gadget that consists of $k$ test-vertices. When arguing about the gadget, we will therefore say
that a \emph{path has length $k$ if it consists of $k$ test-vertices}.

Similarly to Section~\ref{sec:hardfvs}, the analysis of $G$ follows
from proving that (completeness:) any dictatorship function partitions
the test-vertices as required (Section~\ref{sec:fvsgadcomp}) and
(soundness:) that any solution that deletes less than a fraction
$1-6\delta$ of the test-vertices either has a path of length
$|T|^{1-\delta}$ or corresponds to a function with a co-ordinate of
high influence (Section~\ref{sec:fvsgadsound}).

\subsubsection{Completeness}
We show that a dictatorship function $f: [k]^R \mapsto [k]$ of
index $s$ naturally partitions the test-vertices into subsets    $T', T_0, \dots, T_{k-1}$ satisfying $T_j \geq \frac{1-\epsilon }{k}
   |T|, |T'| \leq \epsilon  |T|$, and such that for $j\in [k]$ the graph obtained by deleting  $T' \cup T_j$  has no path of length $k$.

This can be seen by the same arguments as in
Section~\ref{sec:fvsgadcomp}. Indeed if we ``collapse'' the different
layers by identifying the different copies of bit- and test-vertices
then the gadget constructed here is identical to the gagdet in that
section. We can therefore (by the arguments of
Section~\ref{sec:fvsgadcomp}), partition the bit-vertices into $k$
equal sized sets $B_0, B_1, \dots, B_{k-1}$ and all but an $\epsilon
$ fraction of the test-vertices into $k$ equal sized sets $T_0, T_1,
\dots, T_{k-1}$ so that any test-vertex in $T_j$ has only incoming
arcs from bit-vertices in $B_j$ and outgoing arcs to bit-vertices in
$B_{j\oplus 1}$.

Any $j\in [k]$ therefore corresponds to a solution by removing an
$\epsilon$ fraction of the test-vertices (i.e., the set $T'$) and those test-vertices in
$T_j$.

\subsubsection{Soundness}
\label{sec:dvdpsound}
Before proceeding to the analysis it will be convenient to consider a
different but equivalent formulation of the problem.

First, note that in any solution to DVD, i.e., a subgraph so that each
path contains less than $k$ test-vertices, we can find a coloring
$\chi$ (using for example depth-first search) that assigns a color in
$\{1,2, \dots, k\}$ to the  bit-vertices with the property that,
for each remaining test-vertex, the maximum color assigned to its
predecessors is strictly less than the minimum color assigned to its
successors. Similarly, any such coloring $\chi$ can be turned into a
solution to DVD by deleting those test-vertices, for which not all
predecessors are assigned lower colors than all its
successors. Furthermore, from the construction of the arcs, we can
assume w.l.o.g that the coloring satisfies $\chi(b^\ell_x) \leq
\chi(b^{\ell'}_x)$ if $\ell \leq \ell'$.

From the above discussion, an equivalent formulation of DVD on the constructed instances is  as follows:
find a
coloring $\chi$ that assigns a color in $\{1,2,\dots, k\}$ to each
bit-vertex satisfying $\chi(b^\ell_x) \leq \chi(b^{\ell'}_x)$ if $\ell\leq \ell'$ so as to minimize the number of unsatisfied test-vertices
where a test-vertex $t^\ell_{x,S}$ is said to be satisfied if
$$
\max_{\substack{z\in C_{x,S}}} \chi(b^{\ell}_{z}) < \min_{\substack{z\in C^\oplus_{x,S}}} \chi(b^{\ell+1}_{z}),
$$
that is all its predecessors are assigned lower colors than its successors.

It will also be convenient to consider the following  lower bound on the colors assigned to most bit-vertices in each layer:
define the color $\chi(\ell)$ of a bit-layer $\ell = 0, 1,
 \dots, L$ as the maximum color that satisfies $ \Pr_{x}[ \chi(b^\ell_x)
   \geq \chi(\ell)] \geq 1-\delta. $

Now, with each test-layer $\ell= 0, 1, \dots, L-1$ we associate the
indicator function $f^\ell:[k]^R \mapsto \{0,1\}$ defined as follows
$$
f^\ell(x) =
\begin{cases}
0 & \mbox{if} \qquad \chi(b^{\ell+1}_x)  > \chi(\ell),  \\
1 & \mbox{otherwise}.
\end{cases}
$$

The key observation for the soundness analysis is the following.
\begin{claim}
\label{claim:keydictdvdp}
For $\ell=0, \dots, L-1$, assuming that $\Inf_i^d(f^\ell) \leq \eta$ for all $i\in[R]$: if a fraction $3\delta$ of the test-vertices of test-layer
$\ell$ are satisfied, then  $\chi(\ell) < \chi(\ell+1)$.
\end{claim}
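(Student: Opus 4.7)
The plan is to argue the contrapositive: assuming $\chi(\ell)\geq\chi(\ell+1)$, I will show that at most a $2\delta$ fraction (hence strictly less than $3\delta$) of the test-vertices in test-layer $\ell$ are satisfied. The starting point is a dichotomy driven by where $\chi(\ell)$ sits between the two sides of the satisfaction inequality. If $t^\ell_{x,S}$ is satisfied, set $M=\max_{z\in C_{x,S}}\chi(b^\ell_z)$ and $m=\min_{z\in C^\oplus_{x,S}}\chi(b^{\ell+1}_z)$, so $M<m$. Then either (a) $\chi(\ell)>M$, in which case every $z\in C_{x,S}$ has $\chi(b^\ell_z)<\chi(\ell)$; or (b) $\chi(\ell)\leq M<m$, in which case every $z\in C^\oplus_{x,S}$ has $\chi(b^{\ell+1}_z)>\chi(\ell)$, i.e.\ $f^\ell\equiv 0$ on $C^\oplus_{x,S}$.

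Case (a) I intend to handle by a marginal argument: since $x\in C_{x,S}$, case (a) forces $\chi(b^\ell_x)<\chi(\ell)$, an event of probability at most $\delta$ by the very definition of $\chi(\ell)$. Case (b) will use Theorem~\ref{thm:soundsingle} applied to $f^\ell$, and this is the one place where the contrapositive hypothesis is actually used. Maximality in the definition of $\chi(\ell+1)$ gives $\Pr_x[\chi(b^{\ell+1}_x)\leq\chi(\ell+1)]>\delta$, whence the assumption $\chi(\ell)\geq\chi(\ell+1)$ yields
$$
\E[f^\ell] \;=\; \Pr_x[\chi(b^{\ell+1}_x)\leq\chi(\ell)] \;\geq\; \Pr_x[\chi(b^{\ell+1}_x)\leq\chi(\ell+1)] \;>\; \delta.
$$
Combined with $\Inf_i^d(f^\ell)\leq\eta$ for every $i$, Theorem~\ref{thm:soundsingle} produces $\Pr_{x,S}[f^\ell(C_{x,S})\equiv 0]\leq\delta$. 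Since $C^\oplus_{x,S}=C_{x\oplus 1,S}$ and $x\oplus 1$ is uniform when $x$ is, the same bound applies with $C^\oplus_{x,S}$ in place of $C_{x,S}$.

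A union bound over the two cases then bounds the satisfied fraction by $2\delta$, comfortably below $3\delta$. The main (and essentially only) obstacle is the verification that $\E[f^\ell]>\delta$ in the contrapositive regime, without which Theorem~\ref{thm:soundsingle} would be inapplicable; once that tiny calculation from the definition of $\chi(\ell+1)$ is in hand, case (a) collapses to a one-line marginal bound and case (b) becomes a direct invocation of the ``It Ain't Over Till It's Over'' theorem.
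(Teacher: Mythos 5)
Your proof is correct and uses exactly the same ingredients as the paper's proof, just organized as the contrapositive: the paper argues forward from "$\geq 3\delta$ satisfied" to "$\Pr[f^\ell(C_{x,S})\equiv 0]\geq 2\delta$" (by intersecting with the $\geq 1-\delta$ probability event $\chi(b^\ell_x)\geq\chi(\ell)$), then applies Theorem~\ref{thm:soundsingle} in contrapositive form to get $\E[f^\ell]<\delta$ and hence $\chi(\ell+1)>\chi(\ell)$. Your version runs the same chain in reverse, making explicit both the $\E[f^\ell]>\delta$ consequence of the maximality in the definition of $\chi(\ell+1)$ and the two-case union bound that the paper leaves implicit, but no new idea is introduced and the calculations match.
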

\begin{proof}
As at least a fraction $3\delta$ of the test-vertices of test-layer $\ell$ are satisfied,
$$
\Pr_{x,S}\left[
\max_{z\in C_{x,S}} \chi(b^{\ell}_{z}) < \min_{z\in C^\oplus_{x,S}} \chi(b^{\ell+1}_{z})
\right] \geq 3\delta.
$$
By the definition of $\chi(\ell)$ we have  $\Pr_x[\chi(b_x^\ell) \geq
  \chi(\ell)] \geq 1-\delta$ and therefore
$$
\Pr_{x,S}\left[
\chi(\ell) <  \min_{z\in C^\oplus_{x,S}} \chi(b^{\ell+1}_{z})
\right] = \Pr_{x,S}\left[ f^\ell(C_{x,S}) \equiv 0 \right] \geq 2\delta.
$$
As $\Inf_i^d(f^\ell) \leq \eta$ for all $i\in [R]$,
 Theorem~\ref{thm:soundsingle} implies that $E[f^\ell] < \delta$ and hence $\chi(\ell+1) > \chi(\ell)$.
\end{proof}

If a coloring satisfies more than a fraction $6\delta$ of the
test-vertices then at least a $3\delta$ fraction of the test-layers
are such that at least a fraction $3\delta$ of the test-vertices of
that layer are satisfied, which in turn by the preceding claim implies
that either one of them corresponds to a function with a co-ordinate
of high influence or $3\delta L$ many colors are needed (or
equivalently the graph contains a path consisting of at least $3\delta
L-1 \geq \delta L \geq |T|^{1-\delta}$ test-vertices).

\section{Hardness Assuming the Unique Games Conjecture}
\label{sec:ughardness}

In order to turn our dictatorship gadgets into hardness proofs (assuming the Unique Games Conjecture), we need a more general "It Ain't Over Till It's Over" theorem that not only verifies that a given number of functions all are dictatorships but ideally they should also be dictators of the \emph{same} co-ordinate.
Again an even more general variant of the theorem follows from~\cite{MOO10} and an easier proof of a very similar version to the case presented here can be found in~\cite{BK09}.

\begin{theorem}
\label{thm:soundmultiple}
For every $\epsilon, \delta >0$ and integer $k$, there exists $\eta >0$ and integers
$t,d$ such that any collection of functions $f_1, \dots, f_t: [k]^R \mapsto \{0,1\}$ that satisfies
$$
\forall j, \E[f_{j}] \geq \delta  \qquad \mbox{and} \qquad   \forall i \in [R], \forall 1 \leq \ell_1 \neq \ell_2 \leq t, \min\left\{\Inf_{i}^d(f_{\ell_1}),  \Inf_{i}^d(f_{\ell_2})\right\} \leq \eta,
$$
has
$$
\Pr_{x,S_\epsilon}\left[ \bigwedge_{j=1}^t f_{j}(C_{x,S_\epsilon}) \equiv 0\right] \leq \delta.
$$
\end{theorem}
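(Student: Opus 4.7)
The strategy is to reduce Theorem~\ref{thm:soundmultiple} to the single-function Theorem~\ref{thm:soundsingle} by exploiting the disjointness of the high-influence coordinates across the $t$ functions. For each $j$, let $B_j = \{i \in [R] : \Inf_i^d(f_j) > \eta\}$ be the set of ``bad'' coordinates of $f_j$. Observation~\ref{obs:lowdeg} gives $|B_j| \le d/\eta$, and the hypothesis that $\min\{\Inf_i^d(f_{\ell_1}), \Inf_i^d(f_{\ell_2})\} \le \eta$ for any two functions forces the $B_j$'s to be pairwise disjoint. In particular, $B := \bigcup_j B_j$ has size at most $td/\eta$, a bound independent of $R$.

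Next I would split the random draw of $(x, S_\epsilon)$ into two stages: first reveal $x_B$ and whether $S_\epsilon$ intersects $B$; then reveal the remaining randomness. Since $S_\epsilon$ is a sequence of $\epsilon R$ uniform random indices, a union bound gives $\Pr[S_\epsilon \cap B \neq \emptyset] \le \epsilon |B|$, which can be absorbed by suitable choice of parameters. Conditioned on $S_\epsilon \cap B = \emptyset$, every sub-cube $C_{x, S_\epsilon}$ fixes the coordinates of $B_j$ to $x_{B_j}$, and so the event $f_j(C_{x,S_\epsilon}) \equiv 0$ is equivalent to the restricted function $f_j^{x_{B_j}}(z) := f_j(z, x_{B_j})$, viewed as a Boolean function on $[k]^{[R] \setminus B_j}$, vanishing on the induced sub-cube. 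Since $\Pr\bigl[\bigwedge_j f_j(C_{x,S_\epsilon}) \equiv 0\bigr] \le \Pr\bigl[f_j(C_{x,S_\epsilon}) \equiv 0\bigr]$ for any single $j$, it suffices to bound the latter by $\delta$.

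The main obstacle is showing that, for a positive-measure set of restrictions $x_{B_j}$, the function $f_j^{x_{B_j}}$ satisfies the hypotheses of Theorem~\ref{thm:soundsingle}. The expectation condition follows from a reverse Markov inequality applied to $\E_{x_{B_j}}[\E[f_j^{x_{B_j}}]] = \E[f_j] \ge \delta$, yielding $\E[f_j^{x_{B_j}}] \ge \delta/2$ on at least a $\delta/2$ fraction of $x_{B_j}$'s. The influence condition is more delicate: a multilinear-basis expansion yields $\E_{x_{B_j}}[\Inf_i^{d_0}(f_j^{x_{B_j}})] \le \Inf_i^{d_0 + |B_j|}(f_j)$ for $i \notin B_j$, where $\eta_0, d_0$ are the parameters supplied by Theorem~\ref{thm:soundsingle} for the chosen $\epsilon, \delta/2, k$. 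Controlling the right-hand side forces a careful tuning of $\eta, d$ relative to $\eta_0, d_0$, together with Markov's inequality and a union bound over the at most $d_0/\eta_0$ coordinates that could be ``newly influential'' in the restricted function (whose number is bounded by a second application of Observation~\ref{obs:lowdeg}). The heart of the proof, detailed in~\cite{BK09}, is an iterative argument that makes these parameter dependencies consistent; once a good $x_{B_j}$ is produced, Theorem~\ref{thm:soundsingle} applies directly to $f_j^{x_{B_j}}$ and yields the desired conclusion.
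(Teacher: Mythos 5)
The paper does not prove Theorem~\ref{thm:soundmultiple}; it cites \cite{MOO10} for a more general version and \cite{BK09} for a combinatorial proof of essentially this statement, so there is no in-paper argument to compare your attempt against.

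That said, your proposal has a genuine gap that is fatal to the whole strategy, and it appears in the line where you write that since $\Pr[\bigwedge_j f_j(C_{x,S_\epsilon}) \equiv 0] \le \Pr[f_j(C_{x,S_\epsilon}) \equiv 0]$ for any single $j$, ``it suffices to bound the latter by $\delta$.'' The latter simply need not be $\le \delta$. The hypothesis of the theorem only forbids two \emph{distinct} functions from sharing a high-low-degree-influence coordinate; each individual $f_j$ is allowed to have one. Take $f_j(x) = \mathbbm{1}[x_{s_j}=0]$ with distinct coordinates $s_1,\dots,s_t$: then $\E[f_j]=1/k\ge\delta$, the cross-influence condition holds with equality $0$, yet $\Pr_{x,S_\epsilon}[f_j(C_{x,S_\epsilon})\equiv 0]\approx 1/k - \epsilon$, which does not shrink as $\delta\to 0$. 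Your subsequent restriction on $B_j$ does not rescue this: for the $x$'s with $x_{s_j}=1$ (a $1-1/k$ fraction) the restricted function is identically $0$, so its sub-cube is certainly identically $0$, and no choice of parameters will make the resulting bound small. The only way the conclusion $\le\delta$ can hold is by using $t$: under the disjointness of the $B_j$'s, the $t$ events $f_j(C_{x,S_\epsilon})\equiv 0$ become approximately independent (after conditioning on $S_\epsilon$ avoiding $B$), and the conjunction then decays like $(1-\Omega(1))^t$, which is $\le\delta$ once $t$ is large enough. Your argument never makes $t$ do any work --- as written it would prove the statement even with $t=1$, which the dictator example refutes. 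Note also that the iterative/peeling argument you correctly attribute to \cite{BK09} is precisely the device that combines the $t$ functions multiplicatively rather than passing to a single one, so invoking that reference does not patch the single-$j$ reduction; it replaces it.
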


For the applications of this paper, the interesting implication of the above theorem can be formulated as follows: if $\Pr_{x,S_\epsilon}\left[ \bigwedge_{j=1}^t f_{j}(C_{x,S_\epsilon}) \equiv 0\right] > \delta$ for $t$ fairly balanced functions then at least two of them must have  a common influential co-ordinate.

In our (soundness) analyses, we associate a boolean function $f_{w}: \{0,1\}^R \mapsto \{0,1\}$ with each $w\in W$ of the considered Unique Games instance (in fact we shall, as in the dictatorship gadgets, use the domain $[k]^R$ but for simplicity we restrict this discussion to the binary case). We then use the preceding theorem to test whether  $t$ (or more) neighbors $w_1, \dots w_t$
of a vertex $v\in V$ are "close to" consistent, i.e., $f_{w_1}, \dots, f_{w_t}$ are dictatorships on co-ordinates $\rho(w_1), \dots, \rho(w_t)$ such that $\pi_{v,w_1}(\rho(w_1)) = \pi_{v,w_2}(\rho(w_2))= \dots = \pi_{v,w_t}(\rho(w_t))$.
Indeed, on the one hand, if they are consistent then it is easy to see that
$$
\Pr_{x,S_\epsilon}\left[ \bigwedge_{j=1}^t f_{w_j} \circ \pi_{v,w_j} (C_{x,S_\epsilon}) \equiv 0\right] \geq 1/2 -\epsilon,
$$
and on the other hand (assuming $f_{w_1}, \dots, f_{w_t}$ are fairly balanced) if no two of them are "close to" consistent then Theorem~\ref{thm:soundmultiple} implies that
$$
\Pr_{x,S_\epsilon}\left[ \bigwedge_{j=1}^t f_{w_j} \circ \pi_{v,w_j} (C_{x,S_\epsilon}) \equiv 0\right] \leq \delta,
$$
where $f_{w_j} \circ \pi_{v,w_j}(x) = f_{w_j}( x_{\pi_{v,w_j}(1)}, x_{\pi_{v,w_j}(2)}, \dots, x_{\pi_{v,w_j}(R)})$. Similar to the gadget reductions, it is this drastic difference that we exploit to obtain our hardness results.

For the reductions, it shall be convenient to let $C_{x,S, v, w}$ denote the sub-cube
$$
C_{x,S, v, w}  = \{z: z_j = x_{\pi_{v,w}(j)}\ \forall j: \pi_{v,w}(j)\not \in S\}, 
$$
i.e., the image of the sub-cube $C_{x,S}$ via $\pi_{v,w}$.  Note that with this notation we have that
$$
\Pr_{x,S}\left[ \bigwedge_{j=1}^t f_{w_j} \circ \pi_{v,w_j} (C_{x,S}) \equiv 0\right] = \Pr_{x,S}\left[ \bigwedge_{j=1}^t  f_{w_j}(C_{x,S,v,w_j}) \equiv 0
\right]
$$

We now present the adaptations of the dictatorship gadgets for FVS and DVDP to obtain reductions from Unique Games in Sections~\ref{sec:ugcfvs} and~\ref{sec:ugcdvdp}, respectively.
Throughout this section (as in Section~\ref{sec:dictator}), we fix $k$ to be an integer, $\epsilon, \delta >0$ to be arbitrarily small constants and let $\eta, d,$ and $t$ be as in Theorem~\ref{thm:soundmultiple} (depending on $k, \epsilon$ and $\delta$).

\subsection{Feedback Vertex Set}
\label{sec:ugcfvs}
We prove the following theorem which clearly implies the FVS hardness stated in Theorem~\ref{thm:vertmain}.

\begin{theorem}
\label{thm:ugcfvs}
Assuming the Unique Games Conjecture, for any integer $k \geq 2$ and arbitrary constants  $\epsilon, \delta>0$, given a directed graph $G(V,E)$, distinguishing between the following cases is \textsc{NP}-hard:
 \begin{itemize}
    \item \emph{(Completeness):} there exist disjoint subsets $V_1, \dots, V_k \subset V$ satisfying $|V_i| \geq \frac{1-2\epsilon}{k} |V|$ and such that a subgraph induced by all but one of these subsets is acyclic.
        \item  \emph{(Soundness):} every induced subgraph of $8\delta |V|$ vertices contains a cycle.
 \end{itemize}
\end{theorem}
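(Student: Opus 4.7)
The plan is to compose the dictatorship gadget from Section~\ref{sec:hardfvs} with a Unique Games instance via the long-code template laid out at the start of Section~\ref{sec:ughardness}. Given $\mathcal{L}=(G_U(V_U,W_U,E_U),[R],\{\pi_{v,w}\})$ with completeness $1-\zeta$ and soundness $\gamma$ (to be chosen small in terms of $\epsilon,\delta,k$), I would build a directed graph containing a bit-vertex $b^w_z$ of weight $\infty$ for every $w\in W_U$ and $z\in[k]^R$, and a test-vertex $t^v_{x,S}$ of unit weight for every $v\in V_U$, $x\in[k]^R$, and $S\in[R]^{\epsilon R}$. For each edge $(v,w)\in E_U$ I add an arc $b^w_z\to t^v_{x,S}$ whenever $z\in C_{x,S,v,w}$ and an arc $t^v_{x,S}\to b^w_z$ whenever $z\in C^\oplus_{x,S,v,w}:=\{z+\vec 1\bmod k:z\in C_{x,S,v,w}\}$. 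Exactly as in Section~\ref{sec:hardfvs}, the bit-vertices can then be contracted away in favor of direct test-vertex arcs, yielding the unweighted instance $G(V,E)$.

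For completeness, fix $\rho$ with $val(\rho)\ge 1-\zeta$ and think of $b^w_z$ as implicitly coloured by $z_{\rho(w)}\in[k]$. Call $t^v_{x,S}$ \emph{good} if $\rho(v)\notin S$ and $\pi_{v,w}(\rho(w))=\rho(v)$ for every neighbour $w$ of $v$, and place it in $V_{x_{\rho(v)}}$; put the remaining test-vertices in $V'$. Exactly as in Section~\ref{sec:fvsgadcomp}, any good $t^v_{x,S}\in V_j$ receives in-arcs only from colour-$j$ bit-vertices and sends out-arcs only to colour-$(j\oplus 1)$ bit-vertices, so any cycle in $G\setminus V'$ must visit all $k$ colour classes cyclically and therefore meet every $V_j$; deleting any single $V_{j^*}$ breaks every such cycle. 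The $\rho(v)\in S$ event costs at most an $\epsilon$-fraction of test-vertices, and assuming a bounded-degree variant of the UGC with degree $D$, the union bound over neighbours shows at most a $D\zeta$-fraction of $v$'s have any unsatisfied incident edge; choosing $\zeta\le \epsilon/D$ gives $|V'|\le 2\epsilon|V|$ and hence $|V_j|\ge\frac{1-2\epsilon}{k}|V|$.

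For soundness, suppose an induced subgraph on some $T^*\subseteq V$ with $|T^*|\ge 8\delta|V|$ together with all bit-vertices is acyclic, and fix a topological order. Mirroring Section~\ref{sec:fvsgadsound}, let $A$ be the later half of all bit-vertices in this order, set $f_w:=\mathbbm{1}_{A\cap\{b^w_z\}_z}$ and $g_w(z):=1-f_w(z+\vec 1)$, and write $h_w^{(0)}:=f_w$, $h_w^{(1)}:=g_w$. A kept $t^v_{x,S}$ lying before the global median bit-vertex has all its in-neighbours (across every $w\sim v$) in $A^c$, so $f_w(C_{x,S,v,w})\equiv 0$ for every $w\sim v$; if it lies after, all its out-neighbours sit in $A$, giving $g_w(C_{x,S,v,w})\equiv 0$ for every $w\sim v$. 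Writing $\alpha_v^{(b)}:=\Pr_{x,S}\bigl[\bigwedge_{w\sim v}h_w^{(b)}(C_{x,S,v,w})\equiv 0\bigr]$, we obtain $\E_v[\alpha_v^{(0)}+\alpha_v^{(1)}]\ge 8\delta$, so by Markov an $\Omega(\delta)$-fraction of $v$'s has $\alpha_v^{(b)}\ge\delta$ for some $b\in\{0,1\}$. For each such $v$, Theorem~\ref{thm:soundmultiple} applied to any $t$ of its neighbours produces distinct $w_1,w_2\sim v$ and labels $j_1,j_2\in[R]$ with $\pi_{v,w_1}(j_1)=\pi_{v,w_2}(j_2)$ and $\Inf_{j_\ell}^d(h_{w_\ell}^{(b)})\ge\eta$. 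The standard UG-decoding---take $L_w:=\{j:\max(\Inf_j^d(f_w),\Inf_j^d(g_w))\ge\eta\}$, which has size $\le 2d/\eta$ by Observation~\ref{obs:lowdeg}, pick $\rho(w)$ uniformly in $L_w$, and pick $\rho(v)$ through a random neighbour---then satisfies an $\Omega(\delta\cdot\eta^2/(d^2 D^2))$-fraction of UG edges, contradicting the soundness bound once $\gamma$ is chosen below this.

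The main obstacle is the $\E[h_w^{(b)}]\ge\delta$ precondition of Theorem~\ref{thm:soundmultiple}, which may fail for $w$'s whose $\mu_w:=\E[f_w]$ is close to $0$ or $1$. My plan is to restrict the conjunction to neighbours with $\mu_w\in[\delta,1-\delta]$: for biased $w$'s exactly one of the events $f_w(C_{x,S,v,w})\equiv 0$ or $g_w(C_{x,S,v,w})\equiv 0$ is essentially automatic and the other essentially impossible, so biased neighbours cannot inflate $\bigwedge_{w\sim v}h_w^{(b)}\equiv 0$ beyond an $O(\delta)$ additive loss and can be removed from the analysis. A minor secondary issue is that completeness needs $\zeta\le \epsilon/D$, which is absorbed by invoking the standard bounded-degree variant of the UGC.
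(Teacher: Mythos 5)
Your overall strategy (the two-level bit/test graph, the cube-based arcs, the topological cut at a median, the low-influence decoding via Theorem~\ref{thm:soundmultiple}) is the same template the paper uses, but you diverge at the one point that matters most: you index a test-vertex only by $(v,x,S)$ and wire it to \emph{every} neighbor $w\sim v$. This forces both sides of your analysis to depend on the Unique Games degree $D$: completeness needs $\zeta\le\epsilon/D$ so that all incident constraints of a typical $v$ are satisfied, and the final decoding probability becomes $\Omega(\delta\eta^2/(d^2D^2))$. Since $D$ is not a constant in the UGC as stated (Conjecture~\ref{conj:ugc} promises only a regular bipartite graph; nothing bounds the degree), you cannot pick $\zeta$ and $\gamma$ as required, and the ``bounded-degree variant of the UGC'' you lean on is not a standard assumption and is not known to follow from the conjecture. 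The paper sidesteps this with the $(2t)$-tuple device: a test-vertex is $t_{x,S,v,w_1,\dots,w_{2t}}$, wired only to the $2t$ sampled neighbors. Then the completeness loss is $2t\zeta$ (a constant times $\zeta$), and the decoding needs only to guess two coordinates out of $2t$, giving a degree-\emph{independent} $\delta\eta^2/(t^2k^2)$. This tuple sampling is precisely what makes the reduction go through from the standard UGC, and its absence is a genuine gap in your argument.

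A second, smaller deviation: you define $A$ as the later half of \emph{all} bit-vertices, so the per-$w$ restrictions $f_w$ may be heavily unbalanced, and you patch this by discarding biased $w$'s with an informal $O(\delta)$ accounting. The paper avoids the issue entirely by cutting \emph{each} hypercube $\{b_{w,x}\}_{x\in[k]^R}$ at its own median in the topological order, so $\E[f_w]=1/2$ exactly for every $w$. Together with the tuple structure this is tidy: among the $2t$ sampled neighbors, sorting by the per-$w$ boundary position $\max_{f_{w_i}=0}\sigma(b_{w_i,x})$ guarantees that for a surviving test-vertex either the ``top'' $t$ functions are $\equiv 0$ on their cubes or the ``bottom'' $t$ are $\equiv 1$ on the shifted cubes, so Theorem~\ref{thm:soundmultiple} can be applied to exactly $t$ perfectly balanced functions with no loss. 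Your fix is probably repairable (once the tuple structure is added you could equally well switch to per-$w$ cuts and drop the balance workaround), but as written it is an unnecessary source of slack on top of the fatal degree dependence.
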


We first present the reduction in the following subsection followed by the completeness (Lemma~\ref{lemma:ugcfvscomp}) and soundness (Lemma~\ref{lemma:ugcfvssound}) analyses.

\subsubsection{Reduction}

We  describe a reduction from Unique Games to FVS. Let $\mathcal{L}(G(V,W,E), [R], \{\pi_{v,w}\}_{(v,w)\in E}$ be  a Unique Games instance. As in Section~\ref{sec:hardfvs},  the FVS instance consists of two types of vertices that we refer to as \emph{bit-vertices} and \emph{test-vertices} and all edges are between bit- and test-vertices.
\begin{itemize}
\item For every $w\in W$ and $x \in [k]^{R}$, there is a bit-vertex $b_{w,x}$ of weight $\infty$.

    In other words, each $w\in W$ is replaced by a $k$-ary hypercube $[k]^R$ where each vertex has weight $\infty$ so that none of them will ever be deleted in an optimal solution.

\item For every $v\in V, (w_1, \dots, w_{2t}) \in N(v)^{2t}, x\in [k]^R$ and $S= (i_1, i_2, \dots, i_{\epsilon R}) \in [R]^{\epsilon R}$,  we have a test vertex $t_{x,S, v, w_1, \dots, w_{2t}}$.

\item 
    The arcs incident to a test-vertex $t_{ x,S,v, w_1, \dots, w_{2t}}$ are the following. For $j = 1, \dots, 2t$,
    \begin{itemize}
    \item there is an arc  $(b_{w_j, z},t_{v,x,S, w_1, \dots, w_{2t}})$ if $z\in C_{x,S,v,w_j}$,
    \item   and an arc $( t_{v,x,S, w_1, \dots, w_{2t}},b_{w_j,z})$ if $z\in C^{\oplus}_{x,S,v,w_j}= \{z \oplus 1: z\in C_{x,S,v,w_j}\}$.
    \end{itemize}
\end{itemize}

As the bit-vertices have weight $\infty$, they are never deleted in an optimal solution and we can obtain
 an unweighted graph $G$ (with the set $T$ of test-vertices as vertices) with the
same optimal value by having an arc between two test-vertices if there
is a path between them through one bit-vertex.
Theorem~\ref{thm:ugcfvs} therefore follows from proving that (i) we can partition the test-vertices as required in the completeness case (Lemma~\ref{lemma:ugcfvscomp}) and  (ii) that we have to delete almost all test-vertices in the soundness case (Lemma~\ref{lemma:ugcfvssound}).

\subsubsection{Completeness}
\label{sec:ugcfvscomp}

We prove the following.
\begin{lemma}
\label{lemma:ugcfvscomp}
If there is a labeling $\rho$ of the Unique Games
instance $\mathcal{L}$ satisfying a $1-\zeta$ fraction of the constraints then we can partition the test-vertices into subsets $T', T_0, \dots, T_k$ satisfying
$T_j \geq \frac{1-2\epsilon }{k}
   |T|, |T'| \leq 2\epsilon  |T|$, and for $j\in [k]$ the graph obtained by deleting  $T' \cup T_j$  is acyclic.
\end{lemma}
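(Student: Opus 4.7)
\begin{proofof}{Lemma~\ref{lemma:ugcfvscomp} (plan)}
The plan is to mimic the dictatorship-gadget completeness argument of Section~\ref{sec:fvsgadcomp}, using the Unique Games labeling $\rho$ to pick the ``dictator coordinate'' at each $w\in W$ and at each $v\in V$. Concretely, I will first choose $\zeta$ small enough compared to $\epsilon$ and $t$ (say $\zeta \leq \epsilon/(4t)$), then partition the bit-vertices of each $w\in W$ into
\[
B_{w,j} \;=\; \{\, b_{w,x} : x_{\rho(w)} = j\,\}, \qquad j\in[k],
\]
and finally partition the test-vertices by a ``color'' read off at the $V$-side, namely $x_{\rho(v)}$.

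To make this precise, I call a test-vertex $t_{x,S,v,w_1,\dots,w_{2t}}$ \emph{good} if (a)~every edge $(v,w_j)$ is satisfied by $\rho$, i.e.~$\rho(v) = \pi_{v,w_j}(\rho(w_j))$ for $j=1,\dots,2t$, and (b)~$\rho(v) \notin S$. Let $T'$ be the set of bad test-vertices, and partition the good ones by
\[
T_j \;=\; \{\, t_{x,S,v,w_1,\dots,w_{2t}} \text{ good} : x_{\rho(v)} = j\,\}, \qquad j\in[k].
\]
The key structural claim is that every arc incident to a test-vertex in $T_j$ runs between $T_j$ and bit-vertices colored $j$ (in-arcs) or colored $j\oplus 1$ (out-arcs). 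Indeed, for an in-arc $(b_{w_j,z},t)$ with $t\in T_j$ we have $z\in C_{x,S,v,w_j}$; since $(v,w_j)$ is satisfied and $\rho(v)\notin S$, the coordinate $\rho(w_j)$ of $z$ is forced by $z_{\rho(w_j)} = x_{\pi_{v,w_j}(\rho(w_j))} = x_{\rho(v)} = j$, so $b_{w_j,z}\in B_{w_j,j}$. The analogous argument on $C^\oplus_{x,S,v,w_j}$ gives $b_{w_j,z}\in B_{w_j,j\oplus 1}$ for out-arcs.

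Granting this layered structure, after deleting $T'\cup T_j$, any remaining directed cycle must alternate between bit-vertices and good test-vertices not in $T_j$; tracing the color of the bits visited, the color strictly advances (mod $k$) at each test-vertex, so closing the cycle would require crossing a test-vertex in $T_j$, a contradiction. Hence the graph induced by the complement of $T'\cup T_j$ is acyclic.

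It remains to verify the size bounds. Sampling $(x,S,v,w_1,\dots,w_{2t})$ uniformly, regularity of $G(V,W,E)$ and the $(1-\zeta)$-satisfying labeling yield $\Pr[(v,w_j) \text{ unsatisfied}] = \zeta$ for each $j$, so a union bound gives $\Pr[\text{some edge unsatisfied}]\leq 2t\zeta$. Since $S$ is $\epsilon R$ uniform (with-replacement) indices, $\Pr[\rho(v)\in S]\leq \epsilon$. Thus $|T'|\leq (2t\zeta+\epsilon)|T|\leq 2\epsilon|T|$ by our choice of $\zeta$. Conditioned on being good, $x_{\rho(v)}$ is uniform on $[k]$ (independent of the goodness event), so $|T_j|\geq \tfrac{1-2\epsilon}{k}|T|$. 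The main subtlety --- and the only place the choice of $\zeta$ enters --- is in this bookkeeping; the combinatorial core is exactly the ``collapsed'' gadget argument of Section~\ref{sec:fvsgadcomp}.
\end{proofof}
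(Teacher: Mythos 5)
Your proof is correct and follows essentially the same route as the paper: define good test-vertices via $\rho(v)\notin S$ and all $2t$ constraints satisfied, partition by $x_{\rho(v)}$, and use $z_{\rho(w_j)} = x_{\pi_{v,w_j}(\rho(w_j))} = x_{\rho(v)}$ to show arcs from $T_j$ go strictly from color $j$ to color $j\oplus 1$. The only cosmetic difference is that you spell out the cycle argument (colors strictly advance mod $k$, so closing a cycle would require a vertex of the deleted class $T_j$) where the paper leaves it implicit, and you record an explicit choice of $\zeta$; neither changes the substance.
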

\begin{proof}

Let $\rho$ be such a labeling of the Unique Games instance. We now use $\rho$ to partition the bit-vertices in $k$ equal sized sets:
$$
B_j = \{b_{w,x}: w\in W\mbox{ and } x_{\rho(w)} = j\} \qquad \mbox{ for } j\in [k].
$$

We say that a test-vertex $t_{x,S,v,w_1, \dots, w_{2t}}$ is good if (i)
$\rho(v) \not \in S$ and (ii) $\rho$ satisfies all the edges
$(v,w_1), (v,w_2), \dots, (v, w_{2t})$. Note that property (i) holds with
probability at least $1-\epsilon $ and property (ii) holds with
probability at least $1-\zeta 2t$. Therefore, at least a fraction of $
1- 2\epsilon $ (for $\zeta$ small enough) of the test-vertices are
good. As we did for the bit-vertices, we partition the test-vertices into $k$
equal sized sets:
$$
T_j = \{t_{x,S,v,w_1, \dots, w_{2t}}: t_{x,S,v,w_1, \dots, w_{2t}} \mbox{ is good  and } x_{\rho(v)}  = j \} \qquad \mbox{ for } j \in [k].
$$
The sets are of equal size since they are partitioned
according to $x$ and whether a test-vertex is good only depends on $S$
and $v,w_1, \dots, w_{2t}$. Furthermore, since at least $(1-2\epsilon)|T|$ test-vertices are good we have that $|T_j| \geq \frac{1-2\epsilon}{k} |T|$ for $j\in[k]$ and the remaining test-vertices in $T'$ are therefore at most $2\epsilon |T|$ many.

It remains to show that $T_j \cup T'$ defines a feedback vertex set for any $j\in [k]$. The key observation is that $T_j$ only have incoming arcs from bit-vertices in $B_j$ and outgoing arcs to bit-vertices in $B_{j\oplus 1}$. To see this,
consider a test-vertex $t_{x,S, v,w_1, \dots, w_{2t}}$ in $T_j$ and let $(b_{w_i,z}, t_{x,S, v,w_1, \dots, w_{2t}})$ be an arc. Then, by definition we have that $z \in C_{x,S,v, w_i}$.  As $\pi_{v,w_i}(\rho(w_i)) = \rho(v) \not \in S$, 
$$z_{\rho(w_i)} =  x_{\pi_{v,w_i}(\rho(w_i))} = x_{\rho(v)} = j,$$ and hence $b_{w_i,z} \in B_j$.
The exact same
argument also shows that all outgoing arcs from $t_{x,S, v, w_1, \dots, w_{2t}}$ goes to bit-vertices in
$B_{j\oplus 1}$. We can therefore conclude that test-vertices in $T_j$ have
only incoming arcs from bit-vertices in $B_j$ and outgoing arcs to
bit-vertices in $B_{j\oplus 1}$.

By the key observation, we can obtain an acyclic graph by deleting all bad test-vertices and one of the sets $T_0, \dots, T_{k-1}$ which proves the lemma.

\end{proof}

\subsubsection{Soundness}
\label{sec:ugcfvssound}
As we can choose the soundness parameter $\gamma$ of the Unique Games Conjecture to be arbitrarily small the following lemma says that, in the soundness case, there is no feedback vertex set containing less than a $(1-8 \delta)$ fraction of the test-vertices (or equivalently, any induced subgraph containing a $8\delta$ fraction of the test-vertices contains a cycle).
\begin{lemma}
\label{lemma:ugcfvssound}
If the graph has a FVS containing less than a $(1-8\delta)$ fraction of the test-vertices, then the Unique Game instance has a labeling that satisfies at least a fraction $\frac{\delta \eta^2}{t^2 k^2}$ of the constraints.
\end{lemma}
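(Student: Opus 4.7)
The plan is to decode a labeling for the Unique Games instance by mimicking the dictatorship gadget soundness analysis (Section~\ref{sec:fvsgadsound}) on each $w$-slice of bit-vertices separately. First I fix any topological order of the induced subgraph obtained by removing the given FVS (acyclic by hypothesis). For each $w \in W$, I define $f_w : [k]^R \to \{0,1\}$ to be the indicator that $b_{w,x}$ lies in the last half of $\{b_{w,x'} : x' \in [k]^R\}$ according to this order, and let $p_w$ be the topological position just after the last $b_{w,x}$ with $f_w(x)=0$; then $\E[f_w] = 1/2$ for every $w$. Exactly as in Section~\ref{sec:fvsgadsound}, for every kept test-vertex $t_{x,S,v,w_1,\ldots,w_{2t}}$ and every $j \in \{1,\ldots,2t\}$, either $f_{w_j}(C_{x,S,v,w_j}) \equiv 0$ (if the test-vertex precedes $p_{w_j}$; call $j$ type~$0$) or $f_{w_j}(C^\oplus_{x,S,v,w_j}) \equiv 1$ (call $j$ type~$1$), and pigeonhole forces at least $t$ of the $2t$ indices into a common type.

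Next I double-average. For each size-$t$ subset $J \subseteq \{1,\ldots,2t\}$ and each $c \in \{0,1\}$, let $\mathbbm{1}_{J,c}$ denote the indicator that the random test-vertex is kept and every $j \in J$ has type $c$; the previous step yields $\sum_{J,c} \mathbbm{1}_{J,c} \geq \mathbbm{1}_{\text{kept}}$, hence $\sum_{J,c} \Pr[\mathbbm{1}_{J,c}=1] \geq 8\delta$. Averaging first over $v$ identifies a set $V^* \subseteq V$ of density $\Omega(\delta)$ of ``good'' vertices for which the conditional sum is still $\Omega(\delta)$; pigeonhole over the $2\binom{2t}{t}$ choices of $(J,c)$, combined with the symmetry between types $0$ and $1$ and a relabelling of the coordinates of $\{1,\ldots,2t\}$, lets me fix $J^*=\{1,\ldots,t\}$ and $c^*=0$ and conclude, for every $v \in V^*$,
\[
\Pr_{x,S,w_1,\ldots,w_t}\!\left[\, f_{w_j}(C_{x,S,v,w_j}) \equiv 0 \text{ for all } j \leq t \,\right] \;\geq\; \Omega\!\left(\tfrac{\delta}{\binom{2t}{t}}\right).
\]
Applying Theorem~\ref{thm:soundmultiple} in contrapositive form to the $\tfrac{1}{2}$-balanced functions $f_{w_j} \circ \pi_{v,w_j}$ (with its parameter chosen strictly smaller than the RHS above, which is where $t$ gets pinned down): for each $v \in V^*$ and an $\Omega(\delta/\binom{2t}{t})$ fraction of tuples $(w_1,\ldots,w_t) \in N(v)^t$, there exist $\ell_1 \neq \ell_2$ and $i \in [R]$ with $\pi_{v,w_{\ell_1}}^{-1}(i) \in \mathrm{Cand}(w_{\ell_1})$ and $\pi_{v,w_{\ell_2}}^{-1}(i) \in \mathrm{Cand}(w_{\ell_2})$, where $\mathrm{Cand}(w) := \{i' : \Inf_{i'}^d(f_w) > \eta\}$ has size at most $d/\eta$ by Observation~\ref{obs:lowdeg}.

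Finally, I construct a random labeling: draw $\rho(w)$ uniformly from $\mathrm{Cand}(w)$ for each $w \in W$ (arbitrarily if empty), and set $\rho(v) = \pi_{v,w}(\rho(w))$ for a uniformly random neighbor $w \in N(v)$. For a uniform edge incident to some $v \in V^*$, the preceding paragraph supplies a pair of neighbors $w_{\ell_1} \neq w_{\ell_2}$ and a coordinate $i$; with probability $\Omega(1/t^2)$ the random neighbor chosen for $v$ and the edge's other endpoint jointly hit the witnessing pair, with probability $\Omega((\eta/d)^2)$ the independent label choices hit the consistent coordinates, and a $1/k$-factor absorbs the $k$-ary alphabet. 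Aggregating with the $\Omega(\delta)$ density of $V^*$ yields expected edge satisfaction at least $\frac{\delta \eta^2}{t^2 k^2}$ (after absorbing the $\binom{2t}{t}$ and $d$ constants into the parameter choices), and standard derandomization produces a deterministic labeling of the same quality.

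The main obstacle I expect is the bookkeeping through these cascaded averaging steps: the parameter passed to Theorem~\ref{thm:soundmultiple} must be set strictly smaller than the pigeonhole-induced probability $\Omega(\delta/\binom{2t}{t})$, and the constants $\eta, d, t$ must be chosen consistently so that the claimed $\frac{\delta \eta^2}{t^2 k^2}$ bound emerges cleanly at the end.
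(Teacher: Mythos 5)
Your overall plan matches the paper's: define $f_w$ from the topological sort, observe that each surviving test-vertex forces a sub-cube condition, extract candidate labels from influential coordinates, and run a randomized labeling. However, there is a genuine gap in the way you reduce the ``type~$0$ / type~$1$'' dichotomy to an application of Theorem~\ref{thm:soundmultiple}. Your pigeonhole over all $2\binom{2t}{t}$ pairs $(J,c)$ only yields a probability bound of order $\delta/\binom{2t}{t}$, so you need the parameter $\delta'$ passed to Theorem~\ref{thm:soundmultiple} to satisfy $\delta' < c\,\delta/\binom{2t(\delta')}{t(\delta')}$. But $t$ is an \emph{output} of the theorem, determined by $\delta'$, and $\binom{2t}{t}$ grows exponentially in $t$; without a bound on how fast $t(\delta')$ grows, there is no guarantee this self-referential inequality has a solution. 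You flag this yourself as the ``main obstacle'' but do not resolve it, and it is not merely a bookkeeping nuisance --- it is exactly the place where the argument can fail.

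The paper sidesteps this cleanly with a sorting trick that you did not use. For a fixed tuple $(w_1,\dots,w_{2t})$, it relabels so that $m_{w_1} < \cdots < m_{w_{2t}}$, where $m_w = \max_{f_w(x)=0}\sigma(b_{w,x})$ is the ``midpoint'' position of the $w$-slice. Since $j$ is type~$0$ iff $\sigma(t_{x,S}) \leq m_{w_j}$ and type~$1$ otherwise, a single threshold comparison of $\sigma(t_{x,S})$ against the median value $m_{w_t}$ forces either $j=t+1,\dots,2t$ to all be type~$0$, or $j=1,\dots,t$ to all be type~$1$. This gives exactly two cases instead of $2\binom{2t}{t}$, so for a good tuple (one where $\Pr_{x,S}[\text{kept}] \geq 4\delta$) at least one of the two events has probability $\geq 2\delta$, and the theorem can be invoked with its parameter set to $\delta$ --- no circularity. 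I recommend replacing your symmetric pigeonhole with this threshold/sorting step; the remainder of your proof (the candidate-label sets of size $\leq d/\eta$, the $1/(4t^2)$ factor for hitting the witnessing pair, the $\eta^2$ factor for the label draws, and the final $\frac{\delta\eta^2}{t^2 k^2}$ count) then goes through as you describe.
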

\begin{proof}
Consider a topological sort $\sigma: V \mapsto [n]$ of the graph obtained by deleting a FVS and
assume that it contains at least a $8\delta$ fraction of the
test-vertices, i.e., if we let $T$ be the set of remaining
test-vertices then
$$
\Pr_{x,S,v,w_1,\dots, w_{2t}}[ t_{x,S,v,w_1, \dots, w_{2t}} \in T] \geq 8 \delta.
$$ We shall show that this implies that there is a labeling of $\mathcal{L}$ that satisfies at least a $\frac{\delta \eta^2}{t^2 k^2}$ fraction of the constraints.

With each $w\in W$, we associate the
indicator function $f_{w} : [k]^R \mapsto \{0,1\}$ that takes value $0$ for the first half of
the bit-vertices corresponding to $w$ (according to $\sigma$)
and value $1$ for the remaining half.  We then define the set $L[w]$ of candidate labels for every $w\in W$ as:
$$
L[w] := \{i \in [R]: \Inf_i^d(f_{w}) \geq \eta\}.
$$ 
By Observation~\ref{obs:lowdeg}, we have $|L[w]| \leq d/\eta$.

Now, for every $w\in W$, we define $\rho(w)$ to be a random label from
$L[w]$ (if $L[w]$ is empty we assign any label to $w$) and, for every
$v\in V$ we pick a random neighbor $w\in N(v)$ and define $\rho(v) =
\pi_{v,w}( \rho(w))$. We shall now calculate a lower bound on the
expected number of edges the labeling $\rho$ satisfies.

We call a tuple $(v, w_1, \dots, w_{2t})$ good if
$$
\Pr_{x,S}[ t_{x,S, v, w_1, \dots, w_{2t}} \in T] \geq 4 \delta.
$$ Since $|T|$ contains a $8\delta$ fraction of the test-vertices we
have that at least $4\delta$ of the tuples are good.  Consider such a
good tuple $(v, w_1, \dots, w_{2t})$ and let $T_{v, w_1 \dots, w_{2t}}
= \{t_{x,S, v, \dots, w_{2t}} \in T\}$.  Suppose w.l.o.g. that
$$
\max_{f_{w_i}(x) = 0} \sigma(b_{w_i,x})  < \max_{f_{w_{i+1}}(x) = 0} \sigma(b_{w_{i+1},x})\qquad \mbox{for } i = 1, 2, \dots, 2t-1.
$$
Then for a test-vertex $t_{x,S, v,w_1, \dots, w_{2t} }$ to be in  $T_{v, w_1, \dots, w_{2t}}$ we must because of the arcs have
$$
f_{w_{t+1}}(C_{x,S,v,w_{t+1}}) \equiv \dots \equiv f_{w_{2t}}(C_{x,S,v,w_{2t}}) \equiv 0
$$
if $\sigma(t_{x,S}) \leq \max_{f_{w_{t}}(x) = 0} \sigma(b_{w_t,x})$
and otherwise, if $\sigma(t_{x,S}) > \max_{f_{w_{t}}(x) = 0} \sigma(b_{w_t,x})$,
$$
f_{w_{1}}(C^\oplus_{x,S,v,w_1}) \equiv \dots = f_{w_{t}}(C^\oplus_{x,S,v,w_t}) \equiv 1.
$$ Therefore, one of these conditions must be satisfied by at least
half of the test-vertices in $T_{x,S,v, w_1, \dots, w_{2t}}$, i.e., either
$$
\Pr_{x,S}\left[ \bigwedge_{j=t+1}^{2t} f_{w_{j}}(C_{x,S,v,w_j}) \equiv 0\right] \geq 2\delta
$$
or
$$
\Pr_{x,S}\left[ \bigwedge_{j=1}^{t} f_{w_{j}}(C^{\oplus}_{x,S,v,w_j}) \equiv 1\right] = \Pr_{x,S}\left[ \bigwedge_{j=1}^{t} (1-f_{w_{j}})(C_{x,S,v,w_j}) \equiv 1\right]\geq 2\delta.
$$
In either case,
Theorem~\ref{thm:soundmultiple} implies that there exist  $j\in
L[w_{\ell_1}]$ and $j'\in L[w_{\ell_2}]$ for some $1\leq \ell_1 \neq \ell_2 \leq 2t$ such that
$\pi_{v,w_{\ell_1}}(j) = \pi_{v,w_{\ell_2}}(j')$.

We now follow the same argumentations as used in~\cite{BK09}.
Overall, if we pick the tuple $(v, w_1, w_2, \dots, w_{2t})$ at random
and then $w,w'$ at random from the set $\{w_1, \dots, w_{2t}\}$, then
with probability at least $4\delta$ the tuple is good, with probability
$1/(4t^2)$ we have $w= w_{\ell_1}$ and $w'=w_{\ell_2}$, and with
probability $1/(k^2/\eta^2)$, the labeling procedure defines $j =
\rho(w), j' = \rho(w')$. Hence
$$
\Pr_{v, w, w'}[ \pi_{v,w}(\rho(w)) = \pi_{v,w'}(\rho(w'))] \geq \frac{4\delta \eta^2}{4t^2 k^2},
$$
and (expected over the randomness of the labeling procedure)
$$
\Pr_{(v,w)}[ \rho(v) = \pi_{v,w}( \rho(w))] \geq   \frac{\delta \eta^2}{t^2 k^2}.
$$
This shows that there exists a $\rho$ that satisfies a fraction $\frac{\delta \eta^2}{t^2 k^2}$ of the constraints.
\end{proof}

\subsection{Dag Vertex Deletion Problem}
\label{sec:ugcdvdp}
We prove the following theorem which clearly implies the DVD hardness stated in Theorem~\ref{thm:vertmain}.
\begin{theorem}
\label{thm:ugcdvdp}
Assuming the Unique Games Conjecture, for any integer $k \geq 2$ and arbitrary constants  $\epsilon, \delta>0$, given a directed graph $G(V,E)$, distinguishing between the following cases is \textsc{NP}-hard:
 \begin{itemize}
    \item \emph{(Completeness):} there exist disjoint subsets $V_1, \dots, V_k \subset V$ satisfying $|V_i| \geq \frac{1-2\epsilon}{k} |V|$ and such that a subgraph induced by all but one of these subsets has no path of length $k$.
        \item  \emph{(Soundness):} every induced subgraph of $32\delta |V|$ vertices contain a path of length $|V|^{1-\delta}$.
 \end{itemize}
\end{theorem}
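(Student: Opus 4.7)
The plan is to combine the dictatorship gadget for DVD from Section~\ref{sec:harddvdp} with the $2t$-neighbor reduction template used for FVS in Section~\ref{sec:ugcfvs}. Concretely, given a Unique Games instance $\mathcal{L}(G(V,W,E),[R],\{\pi_{v,w}\})$, I would build a weighted graph with (i) bit-vertices $b^{\ell}_{w,x}$ of weight $\infty$ for every $w\in W$, $x\in[k]^R$ and layer $\ell\in\{0,1,\dots,L\}$, and (ii) test-vertices $t^{\ell}_{x,S,v,w_1,\dots,w_{2t}}$ of weight $1$ for every $v\in V$, every tuple $(w_1,\dots,w_{2t})\in N(v)^{2t}$, every $x\in[k]^R$, every $S\in[R]^{\epsilon R}$ and every $\ell\in\{0,\dots,L-1\}$. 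For each index $j\in\{1,\dots,2t\}$ I would place an arc $(b^{\ell}_{w_j,z},t^{\ell'}_{x,S,v,w_1,\dots,w_{2t}})$ whenever $\ell\leq \ell'$ and $z\in C_{x,S,v,w_j}$, and an arc $(t^{\ell'}_{x,S,v,w_1,\dots,w_{2t}},b^{\ell}_{w_j,z})$ whenever $\ell>\ell'$ and $z\in C^{\oplus}_{x,S,v,w_j}$. The parameter $L$ would be chosen with $\delta L\geq |T|^{1-\delta}$, as in Section~\ref{sec:harddvdp}, and the bit-vertices would again be contracted out to obtain an unweighted DAG whose vertex set is $T$.

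\textbf{Completeness.} Given a labeling $\rho$ of $\mathcal{L}$ satisfying at least a $1-\zeta$ fraction of the edges, I would reuse the partition from Lemma~\ref{lemma:ugcfvscomp}: define $B_j=\{b^{\ell}_{w,x}:x_{\rho(w)}=j\}$ and call a test-vertex good if $\rho(v)\notin S$ and $\rho$ satisfies $(v,w_1),\dots,(v,w_{2t})$. Setting $T_j=\{t^{\ell}_{x,S,v,w_1,\dots,w_{2t}}\text{ good and }x_{\rho(v)}=j\}$ and $T'$ the bad ones gives, for $\zeta$ small enough, $|T_j|\geq\frac{1-2\epsilon}{k}|T|$ and $|T'|\leq 2\epsilon|T|$. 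The key identity $x_{\rho(v)}=x_{\pi_{v,w_i}(\rho(w_i))}=z_{\rho(w_i)}$ for $z\in C_{x,S,v,w_i}$ (valid because $\rho(v)\notin S$) shows that a test-vertex of $T_j$ has predecessors only in $B_j$ and successors only in $B_{j\oplus 1}$ across all layers. Thus, deleting $T_j\cup T'$ leaves a graph where every path alternates strictly through the ``color classes'' $B_0,B_{k-1},B_{k-2},\dots,B_{j+1}$ of bit-vertices (now contracted out), which bounds the number of surviving test-vertices on any path by $k-1$.

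\textbf{Soundness.} Here I would mirror Section~\ref{sec:dvdpsound} but in the Unique-Games setting. Assume there is an induced subgraph with $32\delta|V|$ test-vertices and no path of length $|V|^{1-\delta}$; by the coloring reformulation it is induced by a coloring $\chi$ of the bit-vertices with at most $|V|^{1-\delta}\leq \delta L$ colors, monotone in $\ell$. For each $w\in W$ and each layer $\ell$ define $\chi(w,\ell)$ as the largest $c$ with $\Pr_x[\chi(b^{\ell}_{w,x})\geq c]\geq 1-\delta/(2t)$, and let $f^{\ell}_w(x)$ be $0$ if $\chi(b^{\ell+1}_{w,x})>\chi(w,\ell)$ and $1$ otherwise. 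Two averaging steps, over layers and over tuples, extract a layer $\ell$ and a tuple $(v,w_1,\dots,w_{2t})$ such that at least a $\Omega(\delta)$ fraction of the test-vertices $t^{\ell}_{\cdot,\cdot,v,w_1,\dots,w_{2t}}$ are satisfied by $\chi$; together with a union bound over the $2t$ events $\chi(b^{\ell}_{w_j,z})\geq \chi(w_j,\ell)$, this forces
$$\Pr_{x,S}\Bigl[\bigwedge_{j=1}^{2t} f^{\ell}_{w_j}\circ\pi_{v,w_j}(C_{x,S})\equiv 0\Bigr]\geq 2\delta$$
(or the same event for $1-f^{\ell}_{w_j}$ along the reverse half of the tuple, handled by splitting the tuple into the first $t$ and last $t$ indices as in the proof of Lemma~\ref{lemma:ugcfvssound}). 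Pigeonholing on the direction, Theorem~\ref{thm:soundmultiple} then gives two indices $\ell_1\neq\ell_2$ and labels $i\in L[w_{\ell_1}],\,i'\in L[w_{\ell_2}]$ with $\pi_{v,w_{\ell_1}}(i)=\pi_{v,w_{\ell_2}}(i')$, from which the standard random-labeling argument of Lemma~\ref{lemma:ugcfvssound} extracts an assignment satisfying a constant (in $k,\epsilon,\delta,t,\eta$) fraction of the edges of $\mathcal{L}$, contradicting the UGC soundness for $\gamma$ chosen sufficiently small.

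\textbf{Main obstacle.} The conceptually delicate point is the double averaging: the per-tuple functions $f^{\ell}_{w}$ depend on the layer $\ell$, so the candidate label sets $L[w]$ now come equipped with a layer parameter, and one must make sure that fixing $\ell$ before extracting the labeling does not lose the constant fraction of surviving edges. Once the layer is pigeonholed first (using that the coloring uses fewer than $\delta L$ colors so only $O(\delta L)$ layers can separate consecutive bit-layer colors), the remainder of the counting is routine and follows the FVS template almost verbatim.
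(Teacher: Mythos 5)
Your reduction is identical to the paper's, and your completeness argument is correct. There are, however, three issues in the soundness analysis, one of which is a genuine gap.

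\textbf{Missing half of the dichotomy.} The central gap is in how you invoke Theorem~\ref{thm:soundmultiple}. You write that once $\Pr_{x,S}[\bigwedge_j f^{\ell}_{w_j}\circ\pi_{v,w_j}(C_{x,S})\equiv 0]$ is large, the theorem ``then gives two indices $\ell_1\neq\ell_2$ and labels $i,i'$ with $\pi_{v,w_{\ell_1}}(i)=\pi_{v,w_{\ell_2}}(i')$.'' But Theorem~\ref{thm:soundmultiple} only yields a shared influential coordinate when the functions satisfy $\E[f^{\ell}_{w_j}]\geq\delta$; its contrapositive is that \emph{either} many of the $f^{\ell}_{w_j}$ are unbalanced ($\E[f^{\ell}_{w_j}]<\delta$, which means $\chi(w_j,\ell+1)>\chi(w_j,\ell)$) \emph{or} two of the balanced ones share a heavy low-degree influence. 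Your proposal silently discards the first branch, which is precisely the mechanism producing the long path: the paper's argument counts, over all good layers, the fraction of $w\in W$ with a color increase, and that is where $\Pr_{\ell,w}[\chi(w,\ell+1)>\chi(w,\ell)]\geq 32\delta^2$ comes from. You cannot simply pigeonhole a single ``good'' layer and assume the influential-coordinate case holds there; for that layer you could well land in the unbalanced case, in which case Theorem~\ref{thm:soundmultiple} gives no decoded labeling and your contradiction does not close.

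\textbf{Parameter $L$ is too small.} You choose $\delta L\geq|T|^{1-\delta}$, carrying over the dictatorship-gadget choice, but the UG reduction incurs an additional $\delta$ loss from averaging over $W$: the argument only guarantees that some $w$ has $\Omega(\delta^2 L)$ color increases, hence a path of length $\Omega(\delta^2 L)$. To ensure this exceeds $|T|^{1-\delta}$ you need the paper's choice $\delta^2 L\geq|T|^{1-\delta}$.

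\textbf{Spurious appeal to the FVS split.} You also mention handling ``the reverse half of the tuple'' by splitting $(w_1,\dots,w_{2t})$ into its first $t$ and last $t$ indices, as in Lemma~\ref{lemma:ugcfvssound}. That split is specific to FVS, where the topological-sort position of a surviving test-vertex determines whether the ``before'' subcube must be all-$0$ or the ``after'' subcube all-$1$. In the DVD soundness there is no such positional case split: the coloring formulation and the definition of $f^{\ell}_w$ via $\chi(w,\ell)$ directly give, for every satisfied test-vertex, the single condition $\bigwedge_{j=1}^{2t} f^{\ell}_{w_j}(C_{x,S,v,w_j})\equiv 0$ (after absorbing the $\max_j\chi(w_j,\ell)$ event). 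Importing the FVS two-sided split here is a misreading of the structure and does no useful work.

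In short, the construction and completeness are right; the soundness needs (a) explicit treatment of the unbalanced branch of Theorem~\ref{thm:soundmultiple} to accumulate color increases across layers, and (b) $L$ scaled by $1/\delta^2$ rather than $1/\delta$.
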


We first present the reduction in the following subsection followed by the completeness (Lemma~\ref{lemma:ugcdvdpcomp}) and soundness (Lemma~\ref{lemma:ugcdvdpsound}) analyses.

\subsubsection{Reduction}

We  describe a reduction from Unique Games to DVD. Let $\mathcal{L}(G(V,W,E), [R], \{\pi_{v,w}\}_{(v,w)\in E}$ be  a Unique Games instance. As before, it will be convenient to present the DVD instance as it consists of two types of vertices that we refer to as \emph{bit-vertices} and \emph{test-vertices} and all edges are between bit- and test-vertices.

\begin{itemize}
\item The bit-vertices are partitioned into $L+1$ bit-layers ($L$ is selected below). Each bit-layer $\ell=0, \dots, L$ contains a bit-vertex $b^{\ell}_{w,x}$ of weight $\infty$ for every $w \in W$ and $x\in [k]^R$.

    In other words, each $w\in W$ is replaced by a $Q$-ary hypercube $[k]^R$ in each layer.

\item Similarly, the test-vertices are partitioned into $L$
  test-layers.  Each test-layer $\ell = 0, \dots, L-1$ has a
  test-vertex $t^\ell_{x,S, v, w_1, \dots, w_{2t}}$ of weight $1$ for every $x\in [k]^{R}$, every sequence of indices $S= (i_1, \dots, i_{\epsilon R}) \in [R]^{\epsilon R}$, every $v\in V$ and every sequence $(w_1, \dots, w_{2t}) \in N(v)^{2t}$ of (not necessarily distinct) $2t$ neighbors of $v$.

\item     The arcs incident to a test-vertex $t^{\ell'}_{ x,S, v,w_1, \dots, w_{2t}}$ are the following. For $j = 1, \dots, 2t$,
    \begin{itemize}
    \item there is an arc $(b^\ell_{w_j,z}, t^{\ell'}_{x,S, v, w_1, \dots, w_{2t}})$ if $\ell\leq \ell'$ and $z\in C_{x,S, v, w_j}$,
    \item and  an arc $(t^{\ell'}_{x,S, v, w_1, \dots, w_{2t}}, b^\ell_{w_j,z})$ if $\ell>\ell'$ and $z\in C^\oplus_{x,S, v, w_j}$.
    \end{itemize}
\item   Finally, $L$ is selected so as $\delta^2 L \geq |T|^{1-\delta}$ where $T$ is the set of test-vertices.
\end{itemize}

Similar to before, we can obtain an unweighted graph $G$  (with the set $T$ of test-vertices as vertices) with the
same optimal value by having an arc between two test-vertices if there
is a path between them through one bit-vertex.
Theorem~\ref{thm:ugcdvdp} therefore follows from proving that (i) we can partition the test-vertices as required in the completeness case (Lemma~\ref{lemma:ugcdvdpcomp}) and  (ii) that we have to delete almost all test-vertices in the soundness case (Lemma~\ref{lemma:ugcdvdpsound}) in order to avoid long paths.

\subsubsection{Completeness}

We show the following.

\begin{lemma}
\label{lemma:ugcdvdpcomp}
If there is a labeling $\rho$ of the Unique Games
instance $\mathcal{L}$ satisfying a $1-\zeta$ fraction of the constraints then we can partition the test-vertices into subsets $T', T_0, \dots, T_k$ satisfying
$T_j \geq \frac{1-2\epsilon }{k}
   |T|, |T'| \leq 2\epsilon  |T|$, and for $j\in [k]$ the graph obtained by deleting  $T' \cup T_j$  has no path of length $k$.
\end{lemma}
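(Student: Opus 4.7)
The plan is to mirror the completeness proof of Lemma~\ref{lemma:ugcfvscomp} almost verbatim, since the DVD gadget differs from the FVS gadget only in the layered (and hence acyclic) structure, and the layering is not exploited in the completeness direction. Given the labeling $\rho$ satisfying a $(1-\zeta)$-fraction of the constraints of $\mathcal{L}$, I will partition the bit-vertices across all layers into $k$ equal-sized classes
\[
B_j = \{b^\ell_{w,x} : w\in W,\ \ell\in\{0,\ldots,L\},\ x_{\rho(w)} = j\} \qquad \text{for } j\in[k],
\]
and then partition the test-vertices in the analogous way.

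I will call a test-vertex $t^\ell_{x,S,v,w_1,\ldots,w_{2t}}$ \emph{good} if (i)~$\rho(v)\notin S$, which fails with probability at most $\epsilon$ over $S$, and (ii)~$\rho$ satisfies all $2t$ edges $(v,w_1),\ldots,(v,w_{2t})$, which fails with probability at most $2t\zeta$ over a uniform tuple of neighbors. Choosing $\zeta$ small enough that $2t\zeta\leq\epsilon$, at least a $(1-2\epsilon)$-fraction of the test-vertices are good. The good test-vertices split into $T_0,\ldots,T_{k-1}$ according to the value $x_{\rho(v)}$, and the remaining bad test-vertices form $T'$. Because goodness depends only on $S$ and on $(v,w_1,\ldots,w_{2t})$, each $T_j$ has size exactly $|T_{\text{good}}|/k$, so $|T_j|\geq \frac{1-2\epsilon}{k}|T|$ and $|T'|\leq 2\epsilon|T|$.

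The key structural step, identical to the one in the FVS argument, is that every test-vertex $t^\ell_{x,S,v,w_1,\ldots,w_{2t}}\in T_j$ receives arcs only from $B_j$ and emits arcs only to $B_{j\oplus 1}$: for any arc $(b^{\ell'}_{w_i,z}, t^\ell_{\cdots})$ we have $z\in C_{x,S,v,w_i}$, and since $\rho(v)\notin S$ and $\pi_{v,w_i}(\rho(w_i))=\rho(v)$ (by goodness), we get $z_{\rho(w_i)}=x_{\pi_{v,w_i}(\rho(w_i))}=x_{\rho(v)}=j$, i.e.\ $b^{\ell'}_{w_i,z}\in B_j$; the outgoing case is symmetric modulo the $\oplus 1$ shift. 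Consequently, in the unweighted graph $G$ obtained by contracting out bit-vertices, every arc between two good test-vertices goes from some $T_{j_1}$ to $T_{j_1\oplus 1}$.

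To finish, delete $T'\cup T_j$ for any $j\in[k]$. Any directed path of good test-vertices $t^{(1)}\to t^{(2)}\to\cdots\to t^{(m)}$ in the residual graph has class sequence $c, c\oplus 1, \ldots, c\oplus(m-1)$ for some $c\neq j$, and avoidance of class $j$ forces $m\leq k-1$. Hence the residual graph contains no path of length $k$, which is exactly what the lemma requires. I do not anticipate any real obstacle: once the FVS template is in place, the only new ingredient is the trivial observation that a walk whose class increments by $1$ mod $k$ at every step and avoids a fixed class can have length at most $k-1$.
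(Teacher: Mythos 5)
Your proof is correct and follows essentially the same route as the paper: the paper simply notes that collapsing the layers reduces the DVD gadget to the FVS gadget of Section~\ref{sec:ugcfvs} and then invokes Lemma~\ref{lemma:ugcfvscomp}, whereas you re-derive the bit-/test-vertex partition in place and explicitly spell out the closing combinatorial step (that a path whose class label increments by one mod $k$ and avoids a fixed class has at most $k-1$ test-vertices), which the paper leaves implicit.
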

\begin{proof}
Note that if we collapse all layers by identifying the different copies of a bit-vertex and test-vertex in different layers then the DVD instance is equivalent to the FVS instance constructed in Section~\ref{sec:ugcfvs}. We can therefore (by the arguments of Section~\ref{sec:ugcfvscomp}), partition the bit-vertices into $k$ equal sized sets $B_0, B_1, \dots, B_{k-1}$ and all but an $2\epsilon $ fraction of the test-vertices into $k$ equal sized sets $T_0, T_1, \dots, T_{k-1}$ so that any test-vertex in $T_j$ has only incoming arcs from bit-vertices in $B_j$ and outgoing  arcs to bit-vertices in $B_{j\oplus 1}$.

Any $j\in [k]$ therefore corresponds to a solution by removing an $2 \epsilon $ fraction of the test-vertices (i.e., the set $T'$) and those test-vertices in $T_j$.
\end{proof}
\subsubsection{Soundness}

As we can choose the soundness parameter $\gamma$ of the Unique Games Conjecture to be arbitrarily small and $\delta^2 L \geq |T|^{1-\delta}$, the following lemma implies the soundness case of Theorem~\ref{thm:ugcdvdp}.
\begin{lemma}
\label{lemma:ugcdvdpsound}
If the Unique Game instance has no labeling that satisfies a fraction $\frac{\delta \eta^2}{t^2 k^2}$ of the constraints then every induced subgraph of the bit-vertices and  $32\delta |T|$ test-vertices has a path of length $\delta^2 |L|$.
\end{lemma}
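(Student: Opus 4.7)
The plan is to prove the contrapositive, combining the per-layer coloring argument of Claim~\ref{claim:keydictdvdp} with the tuple-based label extraction from the proof of Lemma~\ref{lemma:ugcfvssound}. Assume the DVD instance has an induced subgraph containing at least $32\delta|T|$ test-vertices and no path of length $\delta^2 L$. The first step, exactly as in Section~\ref{sec:dvdpsound}, is to turn this assumption (via depth-first search) into a layer-monotone coloring $\chi$ of the bit-vertices using colors in $\{1,\dots,\delta^2 L\}$ such that every surviving test-vertex $t^\ell_{x,S,v,\vec w}$ satisfies
\[
\max_{j,\,z\in C_{x,S,v,w_j}}\chi(b^\ell_{w_j,z})\;<\;\min_{j,\,z'\in C^\oplus_{x,S,v,w_j}}\chi(b^{\ell+1}_{w_j,z'}).
\]

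For each $w\in W$ and $\ell\in\{0,\dots,L-1\}$, I define a threshold $\chi_w(\ell)$ as the largest color $c$ with $\Pr_x[\chi(b^\ell_{w,x})\ge c]\ge 1-\tau$ (for a parameter $\tau=\Theta(\delta/t)$), the boolean function $f^\ell_w(x)=0$ iff $\chi(b^{\ell+1}_{w,x})>\chi_w(\ell)$, and the candidate label set $L^\ell[w]=\{i:\Inf_i^d(f^\ell_w)\ge\eta\}$, of size at most $d/\eta$ by Observation~\ref{obs:lowdeg}. A double averaging shows that an $\Omega(\delta)$ fraction of triples $(v,\vec w,\ell)$ are \emph{good}, meaning at least an $8\delta$ fraction of their associated $(x,S)$ yield a surviving test-vertex. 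For each good triple, a union bound over the $2t$ values of $j$ of the event $\chi(b^\ell_{w_j,z_j})<\chi_{w_j}(\ell)$ at a fixed representative $z_j\in C_{x,S,v,w_j}$ costs at most $2t\tau\le\delta/2$ in probability, and in its complement the displayed inequality forces $\min_{z'}\chi(b^{\ell+1}_{w_{j'},z'})>\chi_{w_{j'}}(\ell)$ for every $j'$, i.e.\ $f^\ell_{w_{j'}}(C^\oplus_{x,S,v,w_{j'}})\equiv 0$ for every $j'\in\{1,\dots,2t\}$. Since $C^\oplus$ and $C$ have the same distribution over random $(x,S)$, this gives
\[
\Pr_{x,S}\!\left[\bigwedge_{j=1}^{t} f^\ell_{w_j}(C_{x,S,v,w_j})\equiv 0\right]\;\ge\;7\delta.
\]

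Applying Theorem~\ref{thm:soundmultiple} (with its ``$\delta$''-parameter set to $\tau$) to $f^\ell_{w_1}\circ\pi_{v,w_1},\dots,f^\ell_{w_t}\circ\pi_{v,w_t}$ yields either \emph{Case A}: $\E[f^\ell_{w_j}]<\tau$ for some $j$, which by layer-monotonicity forces $\chi_{w_j}(\ell+1)>\chi_{w_j}(\ell)$; or \emph{Case B}: two of the functions share a low-degree influential coordinate, which (after pulling back through the permutations) produces labels $j\in L^\ell[w_{\ell_1}]$ and $j'\in L^\ell[w_{\ell_2}]$ with $\pi_{v,w_{\ell_1}}(j)=\pi_{v,w_{\ell_2}}(j')$. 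Case~A is rare: since $\chi_w$ takes only $\delta^2 L$ distinct values, each $w$ contributes at most $\delta^2 L$ strictly-increasing layer indices, so counting in the regular bipartite structure of $\mathcal L$ charges at most an $O(t\delta^2)$ fraction of all triples. For $\delta$ small enough an $\Omega(\delta)$ fraction of triples therefore lies in Case~B.

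The final labeling draws $\ell\in[L]$ uniformly at random, assigns each $w\in W$ a label $\rho(w)$ uniformly from $L^\ell[w]$, and for each $v\in V$ picks a uniform neighbor $w\in N(v)$ and sets $\rho(v)=\pi_{v,w}(\rho(w))$; the same averaging computation that concludes the proof of Lemma~\ref{lemma:ugcfvssound} (picking two random positions in $\{1,\dots,2t\}$ and using $|L^\ell[w]|\le d/\eta$) then produces a $\frac{\delta\eta^2}{t^2 k^2}$ fraction of satisfied edges in expectation. The main obstacle I anticipate is the simultaneous calibration of $\tau$: it must be small enough, $\tau=O(\delta/t)$, so that the $2t$-neighbor union bound preserves the $\Omega(\delta)$ lower bound on $\Pr[\bigwedge\equiv 0]$, yet the conclusion $\E[f^\ell_w]<\tau$ from Theorem~\ref{thm:soundmultiple} must itself be at the level $\tau$ so that layer-monotonicity really forces strict increase of $\chi_w$. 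This is handled by invoking Theorem~\ref{thm:soundmultiple} with $\tau$ in place of $\delta$ and choosing the reduction's $t$ to be the integer returned by the theorem on that smaller input; beyond this bookkeeping the argument is parallel to Lemma~\ref{lemma:ugcfvssound} with the per-layer functions $f^\ell_w$ replacing the globally half-split $f_w$.
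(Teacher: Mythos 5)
Your plan follows the paper's structure (contrapositive of the paper's contradiction argument, the same per-layer $f^\ell_w$, the same triple-based label extraction), but the extra parameter $\tau=\Theta(\delta/t)$ that you introduce creates a genuine circularity, and your ``bookkeeping'' does not actually resolve it. You need $2t\tau\le\delta/2$, but $t$ is the integer supplied by Theorem~\ref{thm:soundmultiple} \emph{on input $\tau$}, so $t=t(\epsilon,\tau,k)$: choosing $\tau$ small makes $t$ larger, and nothing in the statement of the theorem guarantees $\tau\cdot t(\epsilon,\tau,k)\to 0$. Declaring ``invoke the theorem with $\tau$ and use the $t$ it returns'' does not break the cycle, because the inequality $2t\tau\le\delta/2$ must hold for that very $t$. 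The paper avoids $\tau$ altogether: instead of union-bounding over all $2t$ indices, it fixes the single index $j^*=\arg\max_j \chi(w_j,\ell)$ --- which is a deterministic function of the tuple $(v,w_1,\dots,w_{2t})$ and of $\ell$, not of $(x,S)$ --- and observes that with probability $\geq 1-\delta$ the center of the sub-cube $C_{x,S,v,w_{j^*}}$ already has color $\geq \chi(w_{j^*},\ell)=\max_j\chi(w_j,\ell)$. This loses only one $\delta$ (taking $8\delta$ down to $7\delta$), the threshold in the definition of $\chi(w,\ell)$ stays at $\delta$, and the whole $\tau$-versus-$t$ tension disappears.

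A second, related imprecision: your Case~A is ``$\E[f^\ell_{w_j}]<\tau$ for \emph{some} $j$''. Bounding the fraction of such triples by a union bound over $j$ gives $O(t\delta^2)$, which again needs $t\delta$ small to beat the $\Omega(\delta)$ fraction of good triples. The correct dichotomy, as in the paper, is ``\emph{more than $t$} of the $2t$ functions have small expectation'' versus ``at least $t$ have $\E\ge\delta$, to which Theorem~\ref{thm:soundmultiple} applies''. With that formulation, Markov's inequality gives $\Pr\bigl[\#\{j:\chi_{w_j}(\ell+1)>\chi_{w_j}(\ell)\}>t\bigr]\le \E[\#]/t = 2\Pr_{w,\ell}[\text{strict increase}] \le 2\delta^2$, which is $t$-independent and beats $\Omega(\delta)$ for small $\delta$ with no calibration headaches. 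Once both points are repaired along these lines, your contrapositive formulation is equivalent to the paper's and the rest of the argument (the double-averaging count of good triples, the label-extraction from a pair of shared influential coordinates, and the final counting over $\ell$ and $w$) goes through exactly as you outline.
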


\begin{proof}

As in Section~\ref{sec:dvdpsound}, it shall be convenient to look at the equivalent formulation of the problem where we wish to find a coloring $\chi$ that assigns a color in $\{1,2,\dots, k\}$ to each
bit-vertex satisfying $\chi(b^\ell_{w,x}) \leq \chi(b^{\ell'}_{w,x})$ if $\ell\leq \ell'$ so as to minimize the number of unsatisfied test-vertices
where a test-vertex $t^\ell_{x,S,v, w_1, \dots, w_{2t}}$ is said to be satisfied if
$$
\max_{\substack{1\leq j \leq 2t\\[1mm] z\in C_{x,S,v,w_j}}} \chi(b^{\ell}_{w_j,z}) < \min_{\substack{1\leq j \leq 2t  \\[1mm] z\in C^\oplus_{x,S,v, w_j}}} \chi(b^{\ell+1}_{w_j, z}),
$$
that is, all its predecessors are assigned lower colors than its successors.

We also generalize the concept of a  lower bound on the colors assigned to most bit-vertices corresponding to $w\in W$ in each layer:
define the color $\chi(w,\ell)$ of $w\in W$ and a bit-layer $\ell = 0, 1,
 \dots, L$ as the maximum color that satisfies $ \Pr_{x}[ \chi(b^\ell_{w,x})
   \geq \chi(w,\ell)] \geq 1-\delta. $

Now, with $w\in W$ and each test-layer $\ell= 0, 1, \dots, L-1$ we associate the
indicator function $f^\ell_w:[k]^R \mapsto \{0,1\}$ defined as follows
$$
f^\ell_w(x) =
\begin{cases}
0 & \mbox{if} \qquad \chi(b^{\ell+1}_{w,x})  > \chi(w,\ell),  \\
1 & \mbox{otherwise}.
\end{cases}
$$

Analogous to Claim~\ref{claim:keydictdvdp}, the key statement for the soundness analysis is the following.
\begin{claim}
Assuming that the Unique Games instance $\mathcal{L}$ has no labeling satisfying a fraction $\frac{\delta \eta^2}{t^2 k^2}$ of the constraints: if a fraction $16\delta$ of the test-vertices of test-layer
$\ell$ are satisfied, then  $\chi(w,\ell) < \chi(w,\ell+1)$ for at least a fraction $2\delta$ of the vertices in $W$.
\end{claim}
\begin{proof}
If we let $T$ be the set of satisfied test-vertices of test-layer $\ell$ then (as $T$ contains at least a  fraction $16 \delta$ of the test-vertices in that layer)
$$
\Pr_{x,S, v, w_1, \dots, w_{2t}}\left[
\max_{\substack{1\leq j \leq 2t\\[1mm] z\in C_{x,S,v,w_j}}} \chi(b^{\ell}_{w_j,z}) < \min_{\substack{1\leq j \leq 2t  \\[1mm] z\in C^\oplus_{x,S,v, w_j}}} \chi(b^{\ell+1}_{w_j, z})\right]
 \geq 16\delta.
$$
Similar to Section~\ref{sec:ugcfvssound}, we call a tuple $(v, w_1, \dots, w_{2t})$ good if
$$
\Pr_{x,S}\left[
\max_{\substack{1\leq j \leq 2t\\[1mm] z\in C_{x,S,v,w_j}}} \chi(b^{\ell}_{w_j,z}) < \min_{\substack{1\leq j \leq 2t  \\[1mm] z\in C^\oplus_{x,S,v, w_j}}} \chi(b^{\ell+1}_{w_j, z})\right]
 \geq 8\delta
$$ and note that at least a $8\delta$ fraction of the tuples are good.

By the definition of $\chi(w,\ell)$ we have  $\Pr_x[\chi(b_{w,x}^\ell) \geq
  \chi(w,\ell)] \geq 1-\delta$ and therefore for a good tuple $(v,w_1, \dots, w_{2t})$,
$$
7\delta \leq \Pr_{x,S}\left[
\max_{1\leq j \leq 2t} \chi(w_j,\ell) < \min_{\substack{1\leq j \leq 2t  \\[1mm] z\in C^\oplus_{x,S,v, w_j}}} \chi(b^{\ell+1}_{w_j, z})\right]
 \leq \Pr_{x,S}\left[
\bigwedge_{j=1}^{2t} f^\ell_{w_j} (C_{x,S,v, w_j}) \equiv 0 \right], 
$$
which, by Theorem~\ref{thm:soundmultiple}, implies that either 
\begin{itemize}
\item[(i)] more than $t$ of the functions are such that $\E[f^\ell_{w_j}] < \delta$ and hence 
$\chi(w_j, \ell+1) > \chi(w_j, \ell)$; or 
\item[(ii)] there exists $1 \leq \ell_1 \neq \ell_2 \leq t$ and $j \in L[w_{\ell_1}], j' \in  L[w_{\ell_2}]$ 
such that $\pi_{v,w_{\ell_1}}(j) = \pi_{v,w_{\ell_2}}(j')$, where (similar to Section~\ref{sec:ugcfvssound})
$$
L[w] = \{i\in [R]: \Inf_i^d(f^\ell_w) \geq \eta\}.
$$

\end{itemize}
If condition~(i) holds for half the good tuples, i.e., a fraction $4\delta$ of all tuples, then the statement follows because we can pick a vertex $w$ in $W$ uniformly at random
by first picking  a tuple $(v,w_1, \dots, w_{2t})$ at random and then picking one of the vertices $w_1, \dots, w_{2t}$ at random. With
probability $4\delta$ the tuple is good and satisfy condition~(i) and (conditioned upon that fact) with probability $1/2$ the picked vertex $w_j$ will be such that
$\chi(w_j, \ell+1) > \chi(w_j, \ell)$. Therefore, we have that if condition~(i) holds for half the good tuples then $\Pr_w[\chi(w,\ell+1) > \chi(w,\ell)] \geq 2\delta$ as required.

On the other hand, we shall show that the assumption of the claim (that no labeling of the Unique Games instance satisfies a fraction $\frac{\delta \eta^2}{t^2 k^2}$ of the constraints)
is violated if condition~(ii) holds for more than half the good tuples.
This follows from very similar arguments as in Section~\ref{sec:ugcfvssound} (and in~\cite{BK09}). Indeed, 
for every $w\in W$, define $\rho(w)$ to be a random label
from $L[w]$ and, for every $v\in V$  pick a random neighbor $w\in
N(v)$ and define $\rho(v) = \pi_{v,w}( \rho(w))$. 
If condition~(ii) holds for half the good tuples, then with probability $4\delta$ a random tuple $(v,w_1, \dots, w_{2t})$ is such a tuple, with probability $1/(4t^2)$ we have
we have $w= w_{\ell_1}$ and $w' = w_{\ell_2}$ for $w,w'$ randomly picked in the set $\{w_1, \dots, w_{2t}\}$, and with probability $1/(k^2/\eta^2)$, the labeling procedure defines $j = \rho(w), j' = \rho(w')$. Hence (if condition~(ii) holds for half the good tuples) 
$$
\Pr_{v, w, w'}[ \pi_{v,w}(\rho(w)) = \pi_{v,w'}(\rho(w'))] \geq \frac{4\delta \eta^2}{4t^2 k^2},
$$
and (expected over the randomness of the labeling procedure)
$$
\Pr_{(v,w)}[ \rho(v) = \pi_{v,w}( \rho(w))] \geq   \frac{\delta \eta^2}{t^2 k^2}.
$$
This shows that condition~(ii) cannot hold for half the good tuples as this would imply that there is a labeling 
of $\mathcal{L}$    that satisfies a fraction $\frac{\delta \eta^2}{t^2 k^2}$ of the constraints.

\end{proof}

To see how the above claim implies the lemma consider a subgraph induced by all bit-vertices and a fraction $32\delta$ of the test-vertices and consider the smallest
number of colors needed for a coloring $\chi$ to satisfy all those test-vertices.

Note that at least a fraction $16\delta$ of the test-layers are such that at least a fraction $16\delta$ of the test-vertices of that layer are satisfied by $\chi$.
This in turn, by the preceding claim, implies that  
$$\Pr_{\ell\in [L], w\in W}[ \chi(w,\ell+1) > \chi(w, \ell)] \geq 16\delta\cdot 2\delta = 32\delta^2
$$ 
and hence there exists a $w\in W$ such that $\Pr_{\ell\in [L], w\in W}[ \chi(w,\ell+1) > \chi(w, \ell)] \geq 32\delta^2$.
Therefore, the coloring $\chi$ needs to use at least $32\delta^2 L$ colors to satisfy a fraction $32\delta$ of the test-vertices or, in other words, 
any subgraph induced by the bit-vertices and a fraction $32\delta$ of the test-vertices has a path of length $32\delta^2L -1 \geq \delta^2 L$.
\end{proof}

\section{Discrete Time-Cost Tradeoff Problem}
\label{sec:timecost}

In the discrete time-cost tradeoff problem we are given a set $J$ of activities together with a partial order $(J,<)$. Any execution of the activities must comply with the partial order, that is, if $j<k$ activity $k$ may not be started until $j$ is completed. The duration of an activity depends on how much resources that are spent on it. This tradeoff between time and cost for each job is described by a nonnegative cost function $c_j: \mathbb{R}_+ \mapsto \mathbb{R}_+ \cup \{\infty\}$, where $c_j(x_j)$ denotes the cost to run $j$ with duration $x_j$. The project duration $t(x)$ of the realization $x$ is the makespan (length) of the schedule which starts each activity at the earliest point in time obeying the precedence constraints and durations $x_j$. Given a deadline $T>0$, the \emph{Deadline problem} is that of finding the cheapest realization $x$ that obeys the deadline, i.e., $t(x) \leq T$.

\begin{theorem}
The Deadline problem is as hard to approximate as DVD.
\end{theorem}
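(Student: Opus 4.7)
The plan is to exhibit an approximation-preserving reduction from DVD to the Deadline problem via the natural encoding of vertex-deletion as activity-compression. Given a DVD instance $(G=(V,E),k)$, I construct the Deadline instance with activity set $J=V$, with partial order $(J,<)$ given by $j<j'$ iff $(j,j') \in E$, with cost functions
\[
c_v(x_v) = \begin{cases} 0 & \text{if } x_v = 1,\\ 1 & \text{if } x_v = 0,\\ \infty & \text{otherwise,}\end{cases}
\]
for each $v \in V$, and with deadline $T = k-1$. Any finite-cost realization is then exactly the characteristic vector of a deletion set $S = \{v : x_v = 0\}$, with cost $|S|$, which yields an objective-preserving bijection between DVD-feasible deletion sets and finite-cost feasible Deadline realizations---provided the corresponding feasibilities match.

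The main step is to verify that the Deadline feasibility constraint---makespan $\leq k-1$---is equivalent to the DVD feasibility constraint that $G[V\setminus S]$ contains no path of $k$ vertices. Using the scheduling semantics, each activity starts at $s_v = \max_{(u,v)\in E}(s_u + d_u)$, and an induction along a topological sort of $G$ gives that the makespan of a $0/1$ realization equals the maximum, over directed paths $p$ of $G$, of $|p \cap (V\setminus S)|$. The desired correspondence then reduces to matching this quantity with the length of the longest directed path of $G[V\setminus S]$.

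I expect the matching of these two quantities to be the main obstacle: a priori, a directed path of $G$ may contain many kept vertices without those kept vertices forming a directed path inside $G[V\setminus S]$, because consecutive kept vertices in such a path may be linked in $G$ only through intermediate deleted vertices. Overcoming the obstacle will require either a careful inductive analysis exploiting the DAG structure---arguing that a maximum-kept-count $G$-path can always be restricted to an induced $G[V\setminus S]$-path of the same length---or a mild modification of the reduction, for instance passing to the transitive closure of $G$ on the precedence side, so that the two path-counts agree by construction. Once the feasibility correspondence is in hand, the reduction is polynomial-time and cost-preserving, so any $\alpha$-approximation algorithm for the Deadline problem yields an $\alpha$-approximation algorithm for DVD; combined with Theorem~\ref{thm:vertmain}, this implies the claimed $C$-inapproximability of the Deadline problem for every constant $C$ under the Unique Games Conjecture.
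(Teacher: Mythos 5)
Your reduction has a genuine gap, and you have correctly located it but not overcome it. With the encoding you propose (durations $1$ for kept vertices, $0$ for deleted, precedence the partial order generated by $E$), the makespan of the Deadline instance for a deletion set $S$ equals $\max_p |p \cap (V\setminus S)|$ over all directed paths $p$ of $G$ (equivalently, over all chains of the partial order), and this is in general \emph{strictly larger} than the number of vertices on the longest directed path of $G[V\setminus S]$. A three-vertex example already breaks it: take $G$ with $V=\{a,b,c\}$, $E=\{(a,b),(b,c)\}$, $k=2$, and $S=\{b\}$. Then $G[V\setminus S]$ has no edges and hence no path of $2$ vertices, so $S$ is a feasible DVD solution of cost $1$; but the chain $a<b<c$ has two kept vertices, giving makespan $2 > k-1$, so the corresponding Deadline realization is infeasible. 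Consequently the reduction is one-sided: it shows $\mathrm{OPT}(\text{Deadline}) \geq \mathrm{OPT}(\text{DVD})$, but not the reverse inequality, so an approximation algorithm for the Deadline instance gives no bound relative to $\mathrm{OPT}(\text{DVD})$, and approximation hardness does not transfer.

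Neither of your two suggested repairs closes this gap. The ``restrict a maximum-kept-count $G$-path to a $G[V\setminus S]$-path'' argument is false by the same example: the two kept vertices $a,c$ do not induce a path in $G[V\setminus S]$. Passing to the transitive closure $G^{*}$ does make chains agree with paths, but it silently changes the combinatorial problem to ``delete vertices so that $G^{*}[V\setminus S]$ has no path of $k$ vertices,'' which is not DVD; in the example above, deleting $b$ no longer helps because $(a,c)\in G^{*}$. The paper's reduction is a genuinely different construction: each vertex $i$ is given three activities $l_i<m_i<r_i$ whose rigid cost functions pin $m_i$ to the unit time slot $[i,i+1)$, the ``pay $1$ to compress'' option is attached to $m_i$ with default duration $9/10$, and each arc $(i,j)$ gets an auxiliary activity of fixed duration $j-i-\tfrac{9}{10}+\tfrac{1}{10(k-1)}$ inserted between $m_i$ and $m_j$. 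This forces a drift of $\tfrac{1}{10(k-1)}$ per surviving vertex along any arc chain, so that $k$ consecutive uncompressed vertices overflow their slots, while any realization that compresses one vertex per $k$-path fits the deadline. It is this slot-and-drift mechanism, not a direct duration encoding, that makes both directions of the correspondence hold.
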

\begin{proof}
We reduce (in polynomial time) the problem of approximating DVD to that of
approximating the Deadline problem.  Given an instance of DVD, i.e., an integer $k$ and a DAG $G(V,A)$ with the
vertices ordered $0,1, \dots, n-1$ according to a topological sort,
consider the instance of the Deadline problem defined as follows:

\begin{itemize}
\item The deadline $T$ is set to $n$.
\item The set $J$ of activities contains three activities $l_i, m_i, r_i$ for each vertex $i \in V = \{0,1\dots, n-1\}$  with precedence constraints $l_i < c_i < r_i$ and cost functions
    $$
    c_{l_i}(x) = \begin{cases}
        0, & \mbox{if } x\geq i \\
        \infty, & \mbox{ otherwise }
    \end{cases} \qquad
        c_{m_i}(x) = \begin{cases}
        0, & \mbox{if } x\geq 9/10 \\
        1, & \mbox{ otherwise }
    \end{cases} \qquad
            c_{r_i}(x) = \begin{cases}
        0, & \mbox{if } x> n-1-i \\
        \infty, & \mbox{ otherwise }.
    \end{cases}
    $$

    In addition, there is an activity $a_{(i,j)}$ for each arc $(i,j) \in A$ with precedence constraints $m_i < a_{(i,j)} < m_j$ and cost function
    $$
        c_{a_{(i,j)}}(x) = \begin{cases}
            0 , & \mbox{ if } x \geq j-i - \frac{9}{10} + \frac{1}{10(k-1)}\\
            \infty, & \mbox{otherwise.}
        \end{cases}
    $$
\end{itemize}

See Figure~\ref{fig:projectsched} for an example of the Deadline problem corresponding to a DVD instance $G$ with $k=3$.
\begin{figure}[hbtp]
    \begin{center}
        \includegraphics[width=10cm]{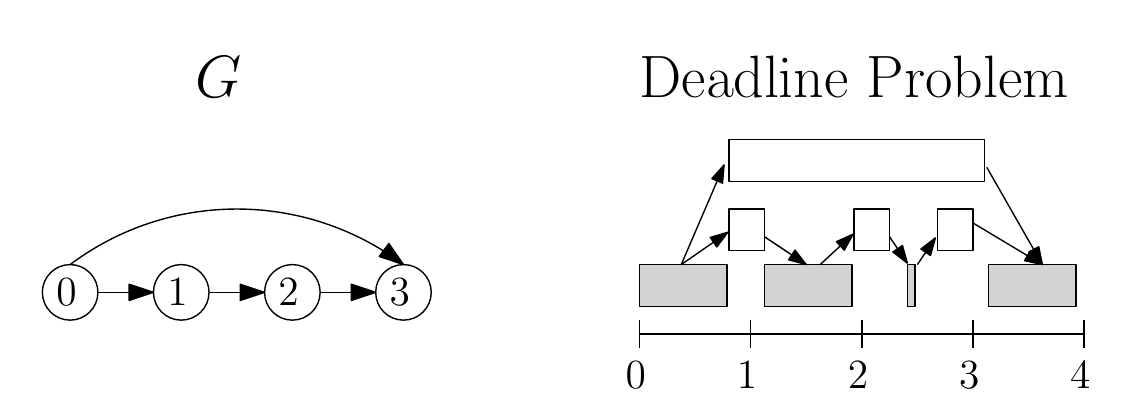}
    \end{center}
    \caption{For each vertex $i\in V$ the activity $m_i$ is depicted in light gray (activities $l_i$ and $r_i$ are omitted). The activities corresponding to arcs are depicted in white.
    Finally, the depicted solution pays a cost of $1$ for running activity $m_2$ in time $0$.}
    \label{fig:projectsched}
\end{figure}
Note that the cost functions of $l_i, m_i$, and $r_i$ enforces that activity $m_i$  has to be executed in the
interval $[i, i+1)$ and that it will require time $9/10$ unless we pay a cost of $1$ which allows us to run the activity in $0$ time.
Furthermore, as an activity $a_{(i,j)}$ always has duration (at least) $j-i-\frac{9}{10} + \frac{1}{10(k-1)}$, the start time $s_j$ of activity $m_j$ must
be such that $s_j - j \geq s_i-i + \frac{1}{10(k-1)}$, where $s_i$ is the start time of activity $i$.
 Using the fact that an activity $m_i$ must run in the interval
$[i,i+1)$ in order to obey the deadline, it follows that we have to pay a cost of $1$ for at least one activity corresponding to each path of length $k$.
By similar arguments, it also follows that this is also sufficient for having a realization respecting the deadline.
Therefore, any solution to the Deadline problem naturally corresponds to a solution to DVD (and vice versa) by deleting those vertices corresponding to
activities with a cost of $1$.

\end{proof}

\bibliographystyle{plain}
\bibliography{HardnessVertexDeletion}

\end{document}